\documentclass[12pt,letterpaper]{article}

\newif\ifisanonymous
\isanonymousfalse 

\newif\ifisTechnometrics
\isTechnometricsfalse 

\usepackage{amsmath,amssymb,amsfonts}
\usepackage{amsthm}

\usepackage{graphicx}
\usepackage[]{natbib}
\ifisTechnometrics
\else  
\fi
\usepackage{multirow}
\usepackage{authblk} 
\usepackage{booktabs}
\usepackage{caption}
\usepackage{bm}
\usepackage[english]{babel}
\usepackage[margin=1in]{geometry}
\usepackage[dvipsnames]{xcolor}
\usepackage{hyperref}
\hypersetup{
    colorlinks=true,
    citecolor=Blue
}

\newtheorem{theorem}{Theorem}
\newtheorem{proposition}{Proposition}

\newtheorem{corollary}[theorem]{Corollary}
\newtheorem{remark}{Remark}

\newcommand{\mat}[1]{\mathbf{#1}} 
\renewcommand{\vec}[1]{\boldsymbol{#1}}
\renewcommand{\b}{\vec{b}}
\newcommand{\y}{\vec{y}}
\newcommand{\yhat}{\hat \y}
\renewcommand{\vec}[1]{\bm{#1}}

\newcommand{\bhat}{\hat \b}
\newcommand{\hatb}{\bhat} 

\newcommand{\bhatd}{\bhat_d}
\newcommand{\bhatLS}{\hat \b_{\mathrm{LS}}}
\newcommand{\bhatfLiu}{\hat \b_{\mathrm{fLiu}}}
\newcommand{\bhatridge}{\hatb_{\mathrm{ridge}}}

\newcommand{\E}{\mathbb{E}} 
\newcommand{\R}{\mathbb{R}} 
\newcommand{\Q}{\mat{Q}_{\lambda,\alpha}}
\newcommand{\rank}{\operatorname{rank}}
\newcommand{\tr}{\operatorname{tr}}

\DeclareMathOperator{\MSE}{MSE}
\DeclareMathOperator{\Var}{Var}

\newcommand{\Zt}{\mat{Z}^{\top}}
\newcommand{\Zp}{\mat{Z}^{+}}
\newcommand{\Zmat}{\mat{Z}}
\newcommand{\Smat}{\mat{S}}
\newcommand{\Sp}{\Smat^{+}}
\newcommand{\Qmat}{\mat{Q}}

\newcommand{\SQ}{\Smat+\Qmat}
\newcommand{\SQinv}{(\Smat+\Qmat)^{-1}}
\newcommand{\Bmat}{\mat{B}}
\newcommand{\Ad}{\mat{A}_d} 
\newcommand{\Mmat}{\mat{M}}
\newcommand{\Rmat}{\mat{R}}
\newcommand{\Hd}{\mat{H}_{d}}
\newcommand{\In}{\mat{I}_{n\times n}}
\newcommand{\Ip}{\mat{I}_{m\times m}}
\newcommand{\Ipp}{\mat{I}_{m+1\times m+1}} 
\newcommand{\Wmat}{\mat{W}}

\newcommand{\fyd}{\hat{\vec{y}}_{d}}

\newcommand{\cardot}{\text{gen-ridge}} 

\ifisanonymous
\author{anonymous authors for peer review}
\else
\author[1]{Shaista Ashraf}
\author[2]{Stephen Becker}
\author[3]{Farrukh Javed}
\author[1]{Ismail Shah}

\affil[1]{Department of Statistics, Quaid-i-Azam University Islamabad, Pakistan}
\affil[2]{Department of Applied Mathematics, University of Colorado Boulder, USA}
\affil[3]{Department of Statistics, Lund University, Sweden}

\fi

\begin{document}

\ifisTechnometrics
\def\spacingset#1{\renewcommand{\baselinestretch}%
{#1}\small\normalsize} \spacingset{1}
\fi

\title{Functional Liu Regression for Scalar-on-Functional Models in High-Dimensional Settings}
\maketitle

\begin{abstract}
This study develops a functional Liu-type shrinkage estimator (\texttt{fLiu}) for scalar-on-function regression in the presence of strong multicollinearity and high-dimensional functional predictors. The approach extends the classical Liu estimator to the functional setting by combining directional shrinkage with smoothness regularization, providing flexible control over the bias--variance trade-off. Theoretical analysis is used to examine the behavior of the estimator and the associated parameter selection problem. In particular, an explicit mean squared error (MSE) decomposition is derived, characterizing the risk of the estimator in terms of variance reduction and shrinkage bias. This further yields an explicit optimal choice of the shrinkage parameter of the \texttt{fLiu} estimator through a one-dimensional convex risk minimization problem, leading to a practical plug-in tuning rule. Moreover, it is shown that in high-dimensional (underdetermined) settings, commonly used criterion such as GCV (and equivalently PRESS/LOO-CV) become constant with respect to the parameter d, thus uninformative for tuning. This provides a theoretical explanation for the predominant focus on the overdetermined regime in existing Liu-type methods.
 Numerical results demonstrate that the estimator achieves competitive predictive accuracy relative to existing methods. Implementation is carried out in \textsf{R} using the \texttt{fda} package, and in Python via the \texttt{fLiu.py} package developed for this study.
\end{abstract}

\ifisTechnometrics
\noindent%
{\it Keywords:} Functional data analysis, Liu estimator, ridge regression
\vfill
\newpage
\spacingset{1.8} 
\fi

\section{Introduction}
\label{Introduction}
Complex and high-dimensional data can challenge conventional statistical modeling techniques, particularly in settings where predictors are not scalar quantities but rather entire functions defined over a continuum such as time, space or frequency. Functional data analysis (FDA) provides a basic framework for analyzing such data, offering tools to model smooth and continuous functional predictors \citep{ramsay2005}. Functional regression models—central to FDA—are commonly classified into three types: functional-on-functional, functional-on-scalar, and scalar-on-functional regression. This study is conducted within the scalar-on-functional regression paradigm, which is especially relevant in applications such as climate science, biostatistics, and economics, and  assumes a scalar response which depends on (possibly multiple) functional predictors. 
Let $\mathcal{T}\subset \R$ represent the functional observation domain, and let $\mathcal{H}=L^2(\mathcal{T})$ denote the Hilbert space of square-integrable real-valued functions on $\mathcal{T}$, equipped with inner product and norm 
\begin{equation}
\langle f,g\rangle_{\mathcal H}
=
\int_{\mathcal T} f(t)g(t)\,dt, \quad \|f\|_{\mathcal H} = \sqrt{\langle f, f\rangle_{\mathcal{H}}}.
\end{equation}
For $i=1,\ldots,n$, the scalar-on-function regression model with $p$ functional predictors is
\begin{equation}\label{eq:FLRM}
y_i
=
\beta_0+\sum_{j=1}^p \langle X_{i,j},\beta_j\rangle_{\mathcal H}+\varepsilon_i,
\qquad
\E(\varepsilon_i)=0,\quad
\operatorname{Var}(\varepsilon_i)=\sigma^2
\end{equation}
where $X_{i,j}\in L^2(\mathcal{T})$ are the functional covariates. From here on, we formulate the theoretical development under the centered model, $\beta_0=0$, for simplicity of exposition.\footnote{In empirical applications, when the data are not centered, an intercept term can be included and estimated separately without affecting the main methodology.}
Each functional predictor is approximated using a truncated basis expansion with $K$ orthonormal basis functions $\{\phi_k\}_{k=1}^{K}$. 
Thus,
\[
X_{i,j}(s)\approx \sum_{k=1}^{K}\xi_{i,j,k}\phi_k(s),
\qquad i=1,\dots,n,\; j=1,\dots,p,
\]
where $\xi_{i,jk}$ is the basis score of the $j$th predictor for subject $i$. For $j=1,\ldots,p$, we represent the coefficient functions $\beta_j$ and their estimates $\widehat{\beta}_{j}$ as
\begin{equation}\label{eq:basis}
\beta_j(s)
=\sum_{k=1}^K b_{j,k}\phi_k(s),
\qquad
\widehat{\beta}_{j}(s)=\sum_{k=1}^K \hat b_{j,k}\phi_k(s).
\end{equation}
For each $j=1,\ldots,p$, define
\[
\b_j=(b_{j,1},\ldots,b_{j,K})^\top,
\qquad
\widehat{\b}_j=(\hat b_{j,1},\ldots,\hat b_{j,K})^\top.
\]
Stacking the true coefficient vectors across all predictors, define
\begin{equation}\label{coeffvec_b}
\b = (\b_1^\top,\ldots,\b_p^\top)^\top \in \mathbb{R}^m,
\qquad m = pK
\end{equation}
and likewise for $\hatb$. 
Then, by orthonormality of $\{\phi_k\}_{k=1}^K$,
\[
\|\widehat{\beta}_{j}-\beta_{j}\|_{L^2(\mathcal T)}^2
=
\|\widehat{\b}_j-\b_j\|_2^2.
\]
We define the design matrix $\Zmat\in\R^{n\times m}$ so that its $i$th row is determined so that 
\begin{align}
(\Zmat \b)_i = \sum_{j=1}^p \langle X_{i,j},\beta_j\rangle_{\mathcal H}
&= \sum_{j=1}^p \left\langle \sum_{k=1}^{K}\xi_{i,j,k}\phi_k,\sum_{k'=1}^K  b_{j,k'}\phi_{k'}\right\rangle_{\mathcal H} \notag \\
&= \sum_{j=1}^p \sum_{k'=1}^K b_{j,k'} \left( \sum_{k=1}^{K} \xi_{i,j,k}  \left\langle \phi_k, \phi_{k'}\right\rangle_{\mathcal H} \right) \label{eq:defZ_general}
\end{align}
and hence, again by orthonormality of $\{\phi_k\}_{k=1}^K$, its $i$th row is
\begin{equation} \label{eq:defZ}
\Zmat_i^\top=
(\xi_{i,(1,1)},\dots,\xi_{i,(1,K)},\xi_{i,(2,1)},\dots,\xi_{i,(p,K)}).
\end{equation}
Collecting the response entries $y_i$ into the vector $\y\in\R^n$ and likewise the errors $\varepsilon_i$ into $\boldsymbol{\varepsilon}\in\R^n$, then the finite-dimensional coefficient-space model for scalar-on-function regression is 
\begin{equation}\label{eq:Centred_model}
\y=\Zmat\b+\boldsymbol{\varepsilon},
\qquad
\E(\boldsymbol{\varepsilon})=0,
\qquad
\operatorname{Cov}(\boldsymbol{\varepsilon})=\sigma^2 \mat{I}_n.
\end{equation}

Despite its flexibility, scalar-on-function regression often faces challenges because the smooth structure of functional predictors induces substantial dependence among basis-expanded regressors. After basis expansion, these dependencies translate into multicollinearity in the coefficient space. Consequently, the design and Gram matrices may be ill-conditioned or singular, leading to unstable ordinary least squares estimates and inflated variance, particularly in high-dimensional settings where the number of basis coefficients approaches or exceeds the sample size. As a result, regularization plays a central role in ensuring stable estimation and reliable prediction in functional regression~\citep{tikhonov1977solutions,hoerl1970}. Functional ridge-type methods \citep{cardot2003,eilers1996} extend ridge regularization to FDA but apply uniform penalties across all covariates. More recent approaches consider structured regularization, variable selection, and clustering in functional regression models \citep{mehrotra2022}.

The Liu estimator offers a principled remedy by introducing controlled shrinkage that stabilizes estimation in such settings, while often retaining more signal in dominant directions than stronger ridge-type penalization. 
The Liu estimator \citep{liu1993}, further developed by \citet{liu2003}, \citet{kibria2003} and \citet{akdeniz2007}, works in tandem with regularization method like ridge regression by introducing a biasing parameter to improve estimation accuracy in linear models. 
Whereas ridge achieves stabilization through a single \(\ell_2\)-type penalty, the Liu approach combines shrinkage with a biasing transformation, which can reduce variance without requiring the same degree of overall shrinkage. A further advantage of Liu regularization is its explicit control of the bias-variance trade-off. As shown by \citet{liu1993}, for an appropriate choice of the Liu parameter, the Liu estimator can dominate ordinary least squares in terms of mean squared error (MSE), thereby offering a theoretically grounded and practically useful alternative to conventional ridge-based methods.
Later studies extended Liu-type estimation through ridge comparisons \citep{gruber2010}, two-parameter forms \citep{ozkale2007}, and robust variants \citep{filzmoser2016}. However, 
it has not been directly extended to functional regression.
Since Liu-type adaptations to functional regression remain largely unexplored,
this paper introduces a functional Liu estimator (\texttt{fLiu}) that integrates smoothness regularization with a Liu-type biasing parameter to address multicollinearity and instability in scalar-on-function regression. We generalize and strengthen Liu's original theorem by establishing mean squared error (MSE) improvement, and then study parameter selection. We select the tuning parameters $(\lambda,d,\alpha)$ using cross-validation criteria based on the linear smoother representation of the fitted values. In particular, we consider leave-one-out cross-validation (LOO-CV) using the predicted residual error sum of squares (PRESS) formula, as well as generalized cross-validation (GCV). For the overdetermined case, the method requires selecting three parameters, making direct grid search unattractive. We therefore develop a software package that uses continuous optimization and automatic differentiation to minimize the GCV or PRESS/LOO-CV criteria. For the underdetermined case, we uncover the phenomenon that the GCV and PRESS/LOO-CV objectives are insensitive to one of the parameters, and hence we propose an alternative plug-in estimator method to find the parameters.

Finally, we numerically compare the performance of \texttt{fLiu} with ordinary least squares (OLS), ridge regression, classical Liu estimation, and generalized ridge-regression 
functional regression methods, showing that the new method is practical and stable. 
The remainder of the paper is organized as follows. Section 2 introduces the methodological framework and presents the proposed functional Liu estimator. Section 3 studies its theoretical properties and risk behavior, while Section 4 presents an empirical application using the Canadian weather dataset. Section 5 concludes with a discussion of findings and potential directions for future research.

\section{Methodology: Functional Liu Shrinkage Strategy for \texorpdfstring{$p$}{p} Covariates}
\label{sec:Methodology}
Before introducing the new functional Liu-type estimator, we first review existing estimators and regularization techniques, all in the finite-dimensional context, i.e., for recovering the coefficients $\b\in\R^m$. We define the Gram matrix as
\begin{equation}\label{eq:Gram_matrix}
\Smat := \Zmat^\top \Zmat \in \R^{m\times m}.
\end{equation}

\subsection{Ordinary least squares estimator}
The ordinary least squares (OLS) estimator $\bhatLS$ minimizes the criteria $ \mathcal{J}(\b)=\|\y - \Zmat\b\|_2^2$, or equivalently, solves the normal equations
\begin{equation}\label{eq:normal}
\Smat \bhatLS = \Zmat^\top \y.
\end{equation}
The general solution to this is $\bhatLS = \Zp \y$ where $\Zp$ is the Moore-Penrose pseudo-inverse of $\Zmat$.
When $\Smat$ is nonsingular, this is equivalent to
\begin{equation}\label{eq:OLS_bhat}
\bhatLS = \Smat^{-1}\Zmat^\top \y.
\end{equation}
When $\Smat$ is singular, there are infinitely many solutions to the normal equations so among these we choose $\bhatLS = \Zp \y$ since it is the minimum norm solution.
Note that $\Zp = \Sp \Zmat^\top$ (see, e.g., \citep[Prop.~3.2]{barata2012moore}).

\subsection{Basis representation and roughness penalties}
\label{Basis and roughness penality}
To encourage smooth coefficient functions, it is common to modify $\mathcal{J}$ with a quadratic roughness penalty of the form
\[
\mathcal{J}_{\Rmat}(\b) = \mathcal{J}(\b) + \lambda \b^\top \Rmat \b
\]
where $\Rmat \in \R^{m\times m}$ is a symmetric positive semi-definite matrix $(\Rmat \succeq 0)$. The specific choice of $\Rmat$ depends on the basis system used to represent the functional coefficients since it is designed as finite-dimensional approximation of some function depending on the derivatives of the $\beta_j$. 
Below we consider two common basis choices: B-splines and the Fourier basis.
For simplicity, the derivations below assume $p=1$; generalizing to $p>1$ is straightforward other than dealing with indexing.

\subsubsection{Spline basis penalties}
Basis splines (or B-splines) are a basis widely used when coefficient functions exhibit local curvature, heterogeneous smoothness, or boundary effects. Splines provide substantial local flexibility through piecewise polynomial segments joined smoothly at knots. 
Note that since B-splines are not orthogonal,  $\Zmat$ cannot be defined using Eq.~\eqref{eq:defZ} but instead it should be defined using the more general Eq.~\eqref{eq:defZ_general}.

Smoothness is usually enforced through penalizing curvature, $\lambda\sum_{j=1}^p\|\beta_j''\|_\mathcal{H}^2$, and this is approximated as usual by using only the first $K$ basis elements. 
Specifically, if $\beta_j(s) =\sum_{k=1}^K  b_{j,k}\phi_{k}(s)$ then $\beta_j''(s) =\sum_{k=1}^K  b_{j,k}\phi_{k}''(s)$ so
\begin{equation}\label{eq:second_derivatives}
\|\beta_j''\|_\mathcal{H}^2 = \sum_{k=1}^K \sum_{k'=1}^K \langle b_{j,k} \phi_k'', b_{j,k'} \phi_{k'}''\rangle_\mathcal{H}.
\end{equation}
When $p=1$, this can be represented as 
\[
\|\beta_j''\|_\mathcal{H}^2 = \b^\top \Rmat \b,\quad \Rmat\in\R^{K\times K}, \quad \left[\Rmat\right]_{k,k'} = \langle \phi_k'', \phi_{k'}''\rangle_\mathcal{H}
\]
and $\Rmat$ is a Gram matrix and hence positive semi-definite.

Alternatively, since the splines are localized, one can make a crude approximation using finite-difference methods.
Let \(D^{(2)} \in \R^{(K-2)\times K}\) denote the second-order difference operator: for each coefficient vector \(\b_j\),
\[
(D^{(2)} \b_j)_k = b_{j,k}-2b_{j,k+1}+b_{j,k+2},
\qquad k=1,\dots,K-2.
\]
The associated roughness penalty is
\[
\|D^{(2)} \b_j\|_2^2
=
\b_j^\top \Rmat \b_j,
\qquad
\Rmat=D^{(2)\top} D^{(2)} \in \R^{K\times K}.
\]
Again, \(\Rmat \succeq 0\) by construction, though generally \(\Rmat \not\succ 0\).

\subsubsection{Fourier basis penalties}
The Fourier basis is particularly suitable for smooth periodic or seasonal processes, where the underlying signal exhibits approximately repeating cycles over the observation domain. Their orthogonality and efficient computation make them especially attractive. 
Fourier expansions are generally less suitable when strong local features or discontinuities are present since discontinuities would require a large $K$ in order to effectively capture the function.
The Fourier basis representation for the coefficient function \(\beta_j(t)\) with period \(T\)  (i.e., $\mathcal{T}=\R/(T\mathbb{Z})$) 
using $K=2\overline K + 1$ basis functions 
is
\[
\beta_j(t)
=
\frac{a_{0,j}}{2\pi}
+
\sum_{k=1}^{\overline K}
\left[
\frac{a_{k,j}}{\sqrt{\pi}}\sin(\omega_k t)
+
\frac{c_{k,j}}{\sqrt{\pi}}\cos(\omega_k t)
\right],
\qquad
\omega_k=\frac{2\pi k}{T}, \quad k=0,\dots,\overline K
\]
and, for $p=1$, $\b=\left[a_0,a_1,c_1,a_2,\ldots,c_{\overline{K}}\right]$.
Defining the basis functions to be 
$$\left\{\frac{1}{\sqrt{2\pi}}, \frac{1}{\sqrt{\pi}}\sin(\omega_1 \cdot), \frac{1}{\sqrt{\pi}}\cos(\omega_1 \cdot), \frac{1}{\sqrt{\pi}}\sin(\omega_2 \cdot), \ldots \right\}$$ and using $\phi_k'' = -\omega_k^2\phi_k$ and orthonormality, then (again for $p=1$) Eq.~\eqref{eq:second_derivatives} leads to 
\[
\Rmat=
\operatorname{diag}
\bigl(
0,\omega_1^2,\omega_1^2,\dots,\omega_K^2,\omega_K^2
\bigr)
\]
which is intuitive since it penalizes higher frequencies more.

\subsection{Generalized ridge estimator}
The classical ridge penalty is $\lambda\b^\top \b$, corresponding to $\b^\top\Qmat \b$ with $\Qmat=\lambda\Ip$. The penalty is rather generic but because $\Qmat \succ \mat{0}$ it is useful when $\Smat$ is singular or ill-conditioned.  In contrast, setting $\Qmat=\lambda\Rmat$, using the roughness matrix $\Rmat$ described in the previous section, is a natural regularizer in the FDA setting because this penalizes curvature of the function, thereby encouraging smoother solutions. However, typically $\Rmat$ is singular (i.e., anytime the basis includes a constant function) so it may not fully ameliorate singularity and ill-conditioning in $\Smat$.

In order to combine the benefits of both $\Qmat=\Ip$ and $\Qmat=\Rmat$, 
we therefore define the hybrid penalty matrix
\begin{equation}\label{eq:Q_hybrid}
\Q
=
\lambda\bigl(\alpha \Ip+(1-\alpha)\Rmat\bigr),
\qquad
\lambda>0,\quad \alpha\in[0,1].
\end{equation}
Here, $\lambda$ controls the overall magnitude of penalization, while $\alpha$ determines the balance between isotropic ridge shrinkage and basis-dependent roughness regularization. We assume, without loss of much generality, that $\Rmat$ is scaled so that
$
\Rmat \preceq \Ip$.
Specifically, we define the functional generalized ridge estimator as
\[
\bhat_\text{gen-ridge}
=
(\Smat+\Q)^{-1}\Zmat^\top \y.
\]
We note that an estimator of a similar form was analyzed by \citet{cardot2003} and \citet{ramsay2002}.
\begin{remark}\label{rmk:Q_posdef}
Assuming \(\Rmat \succeq 0\) and it is scaled such that $\|\Rmat\|\le 1$, then $\Q\succ \mat{0}$ is guaranteed if $\lambda >0$ and either 
\begin{enumerate}
    \item $\alpha>0$, or
    \item $\Rmat\succ \mat{0}$ (and any $\alpha\in[0,1]$).
\end{enumerate}
\end{remark}
When $\lambda$ and $\alpha$ are fixed and clear from context, we write just $\Qmat$ instead of $\Q$ for conciseness.

\subsection{Proposed Functional Liu-type estimator} 
The generalized ridge estimator with penalty matrix $\Qmat$ minimizes $\mathcal{J}(\b)=\|\y-\Zmat\b\|_2^2 + \b^\top \Qmat\b$. To introduce Liu-type biasing, we 
define the estimator $\bhatfLiu$ which minimizes the following centered penalized criterion
\begin{equation}\label{eq:fLiu_objective}
\mathcal{J}_{\lambda,d,\alpha}(\b)
:=
\|\y - \Zmat\b\|_2^2
+
\bigl(\b - d\,\bhatLS\bigr)^\top
\Qmat
\bigl(\b - d\,\bhatLS\bigr),
\qquad d\in\R,
\end{equation}
where $d$ is the Liu-type biasing parameter. The term $(\b-d\bhatLS)$ anchors the penalized solution toward (if $d>0$) or away from (if $d<0$) the least-squares direction in the $\Qmat$-geometry, which can mitigate over-shrinkage in weakly identified directions. Differentiating Eq.~\eqref{eq:fLiu_objective} with respect to $\b$ and setting the gradient to zero gives
\begin{equation}\label{eq:fLiu_normal}
(\Zmat^\top \Zmat + \Qmat)\,\bhatfLiu 
=
\Zmat^\top \y + d\Qmat\,\bhatLS.
\end{equation}
If $\Smat$ is nonsingular or $\Qmat$ is nonsingular (e.g., the conditions of Remark~\ref{rmk:Q_posdef} hold), then 
 $\Zmat^\top \Zmat + \Qmat = \SQ$ is invertible, and the proposed estimator is
\begin{equation}\label{eq:fLiu_estimator}
\bhatfLiu = \bhatfLiu(\lambda,d,\alpha)
=
\SQinv\bigl(\Zmat^\top \y + d \Qmat\bhatLS\bigr),
\qquad d\in\R.
\end{equation}
The estimator is well-defined for any $d\in\R$, though it is traditional to restrict it to $d\in[0,1]$~\cite{liu1993,liu2003}.

When $d=0$, Eq.~\eqref{eq:fLiu_estimator} reduces to the usual hybrid ridge estimator under $\Qmat$. When $d=1$ and $\Smat \succ 0$, 
the estimator reduces exactly to OLS regardless of the values of $\lambda$ and $\alpha$:
\[
\bhatfLiu(\lambda,1,\alpha)=\bhatLS.
\]
This ``$d=1$ returns OLS'' property is central to the risk comparison in Section~\ref{Risk Property}; the proof of this will be apparent (see Remark~\ref{rmk:d1}) once we introduce  Eq.~\eqref{eq:fLiu_as_linear_of_ols} in later sections.

\subsection{Role of Tuning Parameters}
The proposed functional Liu estimator depends on three tuning parameters $(\lambda, d, \alpha)$ each playing a fundamentally distinct role in the geometry and magnitude of shrinkage. Although they interact in the final estimator, their statistical functions are conceptually different.

\paragraph{Role of $\lambda$ (Overall regularization strength):}
The parameter $\lambda>0$ controls the overall magnitude of regularization through the hybrid penalty in Eq.~\eqref{eq:Q_hybrid}.
Increasing $\lambda$ uniformly scales the penalty matrix $\Q$ and therefore increases shrinkage intensity in all directions of the coefficient space. From the estimator equation~\eqref{eq:fLiu_estimator}, larger $\lambda$ inflates $\Q$, enlarging $\Smat + \Q$ and thus shrinking the estimator toward $\bhatLS$. When $\lambda=0$, the estimator reduces to OLS (provided $\Smat$ is invertible). As $\lambda\to\infty$, shrinkage dominates the data-driven component and the variance will go to zero (but with high bias).

\paragraph{Role of $\alpha$ (Geometry of shrinkage):}
The parameter $\alpha \in [0,1]$ determines the structural composition of the hybrid penalty in Eq.~\eqref{eq:Q_hybrid}. It balances two distinct shrinkage mechanisms:
\begin{itemize}
	\item $\alpha = 1$: $\Q = \lambda \Ip$ produces classical ridge shrinkage, which penalizes all coefficient directions equally and does not explicitly enforce smoothness but is very effective at counteracting ill-conditioning of $\Smat$.
	\item $\alpha = 0$: $\Q = \lambda \mat{R}$ yields pure roughness regularization, penalizing curvature of the functional coefficients but may not be positive definite.
	\item $0 < \alpha < 1$: yields a hybrid structure that combines global magnitude control with smoothness enforcement.
\end{itemize}
Geometrically, $\alpha$ determines the shape of shrinkage in coefficient space, whereas $\lambda$ determines its magnitude.

\paragraph{Role of $d$ (Liu-type shrinkage adjustment):}
The parameter $d \in [0,1]$ is the Liu-type biasing parameter and enters the estimator through
Eq.~\eqref{eq:fLiu_estimator}.
Unlike $\lambda$ and $\alpha$, which define the penalty matrix $\Q$, the parameter $d$ modifies how strongly the penalized estimator is anchored toward the least-squares direction---it translates the shrinkage. Two important boundary cases illustrate its role:
\begin{itemize}
	\item $d = 0$: the estimator reduces to the standard hybrid ridge/Tikhonov estimator
	\[
	\bhatridge = (\Smat+\Q)^{-1} \Zmat^\top \y.
	\]
	\item $d = 1$: if $\Smat$ is invertible, the estimator reduces exactly to OLS:
	\[
	\bhatfLiu(\lambda,1) = \bhatLS.
	\]
\end{itemize}
The parameter $d$ governs the bias-variance trade-off along a continuous path connecting OLS ($d=1$) and penalized regression ($d=0$). As $d$ decreases from $1$ toward $0$, the estimator introduces bias but reduces variance through shrinkage. 

To summarize,
\[
\begin{aligned}
\lambda &:\ \text{magnitude of regularization}, \\
\alpha  &:\ \text{geometric structure of the penalty}, \\
d       &:\ \text{bias--variance tuning relative to OLS}.
\end{aligned}
\]
The separation of these roles allows the proposed estimator to decouple 
(i) stability and smoothness enforcement ($\lambda,\alpha$) from 
(ii) Liu-type risk adjustment ($d$), 
providing additional flexibility beyond classical ridge or roughness penalization.	
    
\subsection{Parameter selection}

We select the tuning parameters \((\lambda,d,\alpha)\) using cross-validation criteria based on the linear smoother representation of the fitted values \citep{craven1979}. For the \texttt{fLiu} estimator, the fitted response can be written as
\[
\yhat=\Zmat\bhatfLiu(\lambda,d,\alpha)=\Hd\y,
\]
where the smoother matrix is
\begin{align}
\Hd
&=
\Zmat\SQinv\Zmat^\top
+d\,\Zmat\SQinv\Qmat\,\Smat^{-}\Zmat^\top \notag\\
&=
\Zmat\SQinv(\Ip+d\Qmat\Smat^{-})\Zmat^\top,
\label{eq:smoother_matrix}
\end{align}

For linear smoothers, the leave-one-out cross-validation (LOO-CV) criterion admits the closed-form expression
\[
\mathrm{CV}(\lambda,d,\alpha)
=
\frac{1}{n}\sum_{i=1}^n
\left(
\frac{y_i-\hat y_{d,i}}{1-H_{d,ii}}
\right)^2,
\]
where \(\hat y_{d,i}\) is the \(i\)-th fitted value and \(H_{d,ii}\) denotes the \((i,i)\)-th diagonal entry of \(\Hd\);
when the $\frac{1}{n}$ scaling is absent, this is also known as the predicted residual error sum of squares (PRESS) criterion.

A widely used approximation to LOO-CV is generalized cross-validation (GCV), which replaces the individual leverage values \(H_{d,ii}\) by their average \(\tr(\Hd)/n\), and thus yields
\[
\mathrm{GCV}(\lambda,d,\alpha)
=
\frac{\|\y-\yhat\|_2^2/n}{\left(1-\tr(\Hd)/n\right)^2}.
\]
The GCV criterion is 
especially convenient when the diagonal entries of \(\Hd\) are costly or unstable to evaluate individually. In our implementation, we evaluate LOO-CV/PRESS  and GCV through the common smoother matrix representation in Eq.~\eqref{eq:smoother_matrix}, and select \((\lambda,d,\alpha)\) by minimizing the chosen score. 

Minimizing the score function for a single parameter can be done via grid search: for example, choose 100 values of $\lambda$ and then just evaluate the score at each value. However, because we search for three parameters, grid search is less appealing. If each parameter were quantized into 100 values, this would require evaluating the score $100^3 = 10^6$ times and hence take a long time, since each evaluation requires solving a linear system.  

To circumvent this issue, we instead treat this as a continuous optimization problem over $\R^3$ and have developed a publicly available Python package \texttt{fLiu.py}\footnote{
\ifisanonymous
link redacted in anonymous version
\else
available at \href{https://github.com/stephenbeckr/functional-Liu}{github.com/stephenbeckr/functional-Liu}.
\fi}.
The package uses the Sequential Least Squares Programming (SLSQP) method from the \texttt{scipy.optimize} package, which requires calculation of gradients of the score function with respect to the parameters. Our package does these gradient calculations automatically using the automatic differentiation capabilities of the \texttt{JAX} package. 
To mitigate the issue of the optimizer being stuck at local (but non-global) minimizers, we first perform a coarse grid search with just 5 grid points per dimension and use the outcome of this grid search as the initialization for the SLSQP method. 
Typically the optimal parameters are found to very high accuracy in at most a few dozen function and gradient evaluations, which is a large computational savings over the millions of function evaluations required by a fine grid search.

\section{Theoretical Properties}
This section examines the statistical properties of the proposed fLiu estimator. We first derive its sample moments, including bias and variance, and then study its risk behavior to assess efficiency and stability relative to competing estimators.

\subsection{Moments of the estimator}

Under the model in Eq.~\eqref{eq:FLRM} and the assumptions in Eq.~\eqref{eq:Centred_model}, 
then it is well-known that 
the ordinary least squares estimator $\bhatLS=\Zp\y = \Sp\Zmat^\top \y$ has mean, bias, variance and mean squared error given by
\begin{align}
\E[\bhatLS] &= \Zp\Zmat\b, &  \E[\bhatLS] &= \b\quad\text{if }\Smat\text{ invertible}\label{OLS_Mean} \\
\mathrm{Bias}[\bhatLS] &= (\Zp\Zmat-\Ip)\b  &\mathrm{Bias}[\bhatLS]&=\vec{0}\quad\text{if }\Smat\text{ invertible}, \label{OLS_Bias} \\
\mathrm{Var}[\bhatLS] &= \sigma^2 \Sp & \mathrm{Var}[\bhatLS] &=\sigma^2\Smat^{-1}\quad\text{if }\Smat\text{ invertible}, \label{OLS_Var} \\
\mathrm{MSE}[\bhatLS] &= \sigma^2 \tr(\Sp)+\left\| \Zp\Zmat\b-\b \right\|_2^2 & 
\mathrm{MSE}[\bhatLS] &=\sigma^2 \tr(\Smat^{-1})\quad\text{if }\Smat\text{ invertible} \label{OLS_MSE}
\end{align}
using the easily-derived identity $\MSE[\bhat] = \tr( \Var[\bhat] ) + \left\| \mathrm{Bias}[\bhat] \right\|_2^2$ and that $\Sp \Smat\Sp = \Sp$. Specifically, the variance identity follows because $\mathrm{Var}[\bhatLS]=\mathrm{Var}[\Sp\Zmat^\top\y] = \Sp\Zmat^\top \mathrm{Var}[\y] \Zmat\Sp = \sigma \Sp\Smat\Sp$.

We will now derive the moments for the functional Liu-type estimator. 
As before, we will often write $\mat{Q}$ instead of $\Q$ (and $\bhatfLiu$ instead of $\bhatfLiu(\lambda,d,\alpha)$) when $\lambda$ and $\alpha$ are fixed and clear from the context.
Using the OLS normal equation $\Zmat^\top \y=\Smat\bhatLS$, we can rewrite Eq.~\eqref{eq:fLiu_estimator} as
\begin{align}
\bhatfLiu(\lambda,d,\alpha) &= \SQinv\bigl(\Zmat^\top \y + d\Qmat\bhatLS\bigr) \\
&=\SQinv\bigl(\Smat\bhatLS + d \Qmat\bhatLS\bigr) \notag\\
&=\underbrace{\SQinv(\Smat+d\Qmat)}_{\Ad}\bhatLS. \label{eq:fLiu_as_linear_of_ols}
\end{align}
so that we can succinctly write 
\begin{equation}\label{eq:fLiu_Ahatb}
\bhatfLiu(\lambda,d,\alpha)=\Ad \bhatLS.
\end{equation}
\begin{remark}[Reduction to OLS for $d=1$]
\label{rmk:d1}
Because $\Ad=\Ip$ when $d=1$, we see from Eq.~\eqref{eq:fLiu_as_linear_of_ols} that for any $\alpha$ and $\lambda$, $\bhatfLiu(\lambda,d,\alpha)=\bhatLS$ if $d=1$.
\end{remark}

\begin{proposition}\label{prop:moments}
For any real parameters $\lambda,d,\alpha$, assuming $\SQ$ is invertible for any $\Qmat\succeq\mat{0}$
 (e.g., if $\Qmat=\Q$, this is guaranteed if either $\mat{Z}$ has full column rank, or 
if any of the conditions of Remark~\ref{rmk:Q_posdef} hold)
then $\bhatfLiu=\bhatfLiu(\lambda,d,\alpha)$ is well-defined and 
\begin{align}
\E\left[\bhatfLiu\right]&=
\SQinv(\Smat+d\Qmat)\b .   \label{eq:fLiu_mean} \\
\mathrm{Bias}\left[\bhatfLiu\right]
&=
(d-1)\SQinv\Qmat\b \label{eq:fLiu_bias} \\
\mathrm{Var}\left[\bhatfLiu\right]
&=
\sigma^2\SQinv(\Smat+d\Qmat)\Smat^{+}(\Smat+d\Qmat)\SQinv 
= \sigma^2 \Ad  \Smat^{+} \Ad^\top \label{eq:fLiu_var}
\end{align} 
\end{proposition}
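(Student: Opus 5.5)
The plan is to work entirely from the linear representation $\bhatfLiu=\Ad\bhatLS$ in Eq.~\eqref{eq:fLiu_Ahatb}, with the OLS moments in Eqs.~\eqref{OLS_Mean}--\eqref{OLS_Var} as input. Because $\Ad$ is a deterministic matrix, linearity of expectation and the rule $\Var[\mat{A}\vec{x}]=\mat{A}\Var[\vec{x}]\mat{A}^\top$ then give all three displayed formulas.

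First I will confirm that the estimator is well-defined and that Eq.~\eqref{eq:fLiu_as_linear_of_ols} holds without assuming $\Smat$ invertible. Invertibility of $\SQ$ is assumed, so Eq.~\eqref{eq:fLiu_estimator} makes sense. The step $\Zmat^\top\y=\Smat\bhatLS$ needs care: with $\bhatLS=\Sp\Zmat^\top\y$ we have $\Smat\bhatLS=\Smat\Sp\Zmat^\top\y$. Since $\Smat\Sp$ is the orthogonal projector onto $\operatorname{range}(\Smat)=\operatorname{range}(\Zmat^\top)$, this equals $\Zmat^\top\y$. Hence $\bhatfLiu=\Ad\bhatLS$ holds identically.

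For the variance, Eq.~\eqref{OLS_Var} gives $\Var[\bhatLS]=\sigma^2\Sp$, so $\Var[\bhatfLiu]=\sigma^2\Ad\Sp\Ad^\top$. Since $\Smat$ and $\Qmat$ are symmetric, $\Ad^\top=(\Smat+d\Qmat)\SQinv$, which yields both forms in Eq.~\eqref{eq:fLiu_var}.

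For the mean, Eq.~\eqref{OLS_Mean} gives $\E[\bhatfLiu]=\Ad\Zp\Zmat\b$. Expanding through Eq.~\eqref{eq:fLiu_estimator} instead gives $\SQinv(\Smat\b+d\Qmat\Zp\Zmat\b)$, using $\Smat\Zp\Zmat=\Smat$. The bias then follows by algebra. Writing $\b=\SQinv(\Smat+\Qmat)\b$, we get
\[
\E[\bhatfLiu]-\b=\SQinv\bigl(\Smat+d\Qmat-\Smat-\Qmat\bigr)\b=(d-1)\SQinv\Qmat\b.
\]

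The main obstacle is this mean step in the rank-deficient case. The stated formula \eqref{eq:fLiu_mean} requires $\Qmat\Zp\Zmat\b=\Qmat\b$. This holds automatically when $\Smat$ is invertible, because then $\Zp\Zmat=\Ip$. It also holds whenever $\b$ lies in the row space of $\Zmat$, i.e.\ $\b$ is identifiable. Otherwise an extra term $d\,\SQinv\Qmat(\Zp\Zmat-\Ip)\b$ appears in both the mean and the bias.

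So I would prove \eqref{eq:fLiu_var} in full generality. I would prove \eqref{eq:fLiu_mean}--\eqref{eq:fLiu_bias} under the standing interpretation that either $\Smat\succ\mat{0}$ or $\b\in\operatorname{range}(\Zmat^\top)$, i.e.\ that $\b$ is the minimum-norm parametrization. I would add a remark giving the general correction term.
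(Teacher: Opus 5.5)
Your argument follows the paper's route: write $\bhatfLiu=\Ad\bhatLS$ and push the OLS moments through the deterministic map $\Ad$. The variance step, including your check that $\Smat\Sp\Zt\y=\Zt\y$, matches the paper and holds for singular $\Smat$.

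Your caveat on the mean and bias is correct, and it identifies a real gap in the paper's own proof. The paper says Eq.~\eqref{eq:fLiu_mean} ``follows immediately'' because $\E[\bhatLS]=\b$. By Eq.~\eqref{OLS_Mean}, that identity needs $\Smat$ invertible. Yet the proposition is meant to cover singular $\Smat$: it allows invertibility of $\SQ$ via Remark~\ref{rmk:Q_posdef}, and it uses $\Smat^{+}$ in Eq.~\eqref{eq:fLiu_var}. Read literally, ``invertible for any $\Qmat\succeq\mat{0}$'' includes $\Qmat=\mat{0}$ and hence forces $\Smat\succ\mat{0}$, in which case the statement is fine. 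The parenthetical makes clear, however, that this is not the intended scope.

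You can sharpen your correction term. Since $\Smat\Zp\Zmat=\Smat$, we have $(\SQ)(\Ip-\Zp\Zmat)\b=\Qmat(\Ip-\Zp\Zmat)\b$. Hence $d\,\SQinv\Qmat(\Zp\Zmat-\Ip)\b=-d(\Ip-\Zp\Zmat)\b$, and
\[
\E[\bhatfLiu]=\SQinv(\Smat+d\Qmat)\b-d(\Ip-\Zp\Zmat)\b .
\]
So Eqs.~\eqref{eq:fLiu_mean}--\eqref{eq:fLiu_bias} hold exactly when $d=0$ or $\b\in\operatorname{range}(\Zt)$, whatever $\Qmat$ is. In particular, your condition $\Qmat\Zp\Zmat\b=\Qmat\b$ is equivalent to $\Zp\Zmat\b=\b$ once $\SQ$ is invertible.

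The stated formula is genuinely false otherwise, not merely unproved. Take $n=1$, $\Zmat=(1\;\;0)$, $\Qmat=\mat{I}_{2\times 2}$ and $\b=(0,1)^\top$. Then $\bhatfLiu=\bigl((1+d)\varepsilon/2,\,0\bigr)^\top$, whose mean is $\vec{0}$, while Eq.~\eqref{eq:fLiu_mean} predicts $(0,d)^\top$.

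The same restriction carries over to Theorem~\ref{thm:risk_fliu}, whose proof inserts Eq.~\eqref{eq:fLiu_bias} into the MSE while allowing singular $\Smat$.
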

\begin{proof}
The statement for the mean, Eq.~\eqref{eq:fLiu_mean}, follows immediately because $\E[ \mat{A}_d\bhatLS  ]=\mat{A}_d \E[ \bhatLS] $ and $\E[\bhatLS] = \b$.

To calculate the bias,
\begin{align*}
\mathrm{Bias}\left[\bhatfLiu(\lambda,d,\alpha)\right]
&:=\E[\bhatfLiu(\lambda,d,\alpha)]-\b \\
&=\Bigl(\SQinv(\Smat+d\Qmat)-\Ip\Bigr)\b \\
&=\Bigl(\SQinv(\Smat+d\Qmat)-\SQinv(\SQ)\Bigr)\b\\
&= (d-1)\SQinv\Qmat\b.
\end{align*}

Finally, using $\mathrm{Var}[\bhatLS] = \sigma^2 \Smat^{+}$ from Eq.~\eqref{OLS_Var}, 
the variance is
\begin{align*}
\mathrm{Var}[\bhatfLiu] &= 
\Ad \mathrm{Var}[\bhatLS]\Ad^\top \\
&= \Ad (\sigma^2 \Smat^{+} )\Ad^\top \\
&= \sigma^2\SQinv(\Smat+d\Qmat)\Smat^{+}(\Smat+d\Qmat)\SQinv.
\end{align*}
\end{proof}

\subsection{Risk Properties}
\label{Risk Property}

A celebrated result of the original Liu estimator is the fact that there is some $d\in(0,1)$ for which it improves on the least squares estimator in terms of the MSE.  Below, we generalize this theorem considerably and show that a similar result holds for the functional Liu estimator.

\begin{theorem}[Risk improvement of the Functional Liu estimator]\label{thm:risk_fliu}

Consider the centered finite-dimensional model of Eq.~\eqref{eq:Centred_model}. 
Let $\Qmat \succ \mat{0}$ be any symmetric positive definite matrix (e.g., $\Qmat=\Q$ and either condition of Remark~\ref{rmk:Q_posdef} holds). Since $\Smat=\Zmat^\top \Zmat \succeq \mat{0}$, it follows that $\Smat+\Qmat \succ \mat{0}$, and therefore $\SQinv$ exists.
Let $\Qmat=\lambda(\alpha \Ip+(1-\alpha)\mat{R})$ with $\lambda>0$ and $\alpha\in(0,1]$, and define
\[
\bhatLS= \Smat^{+}\Zmat^\top \y,
\qquad
\bhatd=\SQinv\bigl(\Zmat^\top \y+d\Qmat\bhatLS\bigr).
\]

Then $\exists d\in[0,1)$ such that 
\[
\MSE[\bhatd] :=\E\|\bhatd-\b\|_2^2 < \E\|\bhatLS-\b\|_2^2 = \MSE[\bhatLS]
\]
where the expectation is over the noise $\vec{\varepsilon}$ in Eq.~\eqref{eq:Centred_model}. The value of $d$ may depend on $\Zmat,\b$ and $\Qmat$.
\end{theorem}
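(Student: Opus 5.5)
The plan is to view $\MSE[\bhatd]$ as a scalar function of $d$ and show that it strictly decreases as $d$ moves down from $1$. By Remark~\ref{rmk:d1}, $\bhatd=\bhatLS$ at $d=1$, so it suffices to show that the derivative of $g(d):=\MSE[\bhatd]$ at $d=1$ is strictly positive. Some $d\in[0,1)$ close to $1$ then gives $g(d)<g(1)=\MSE[\bhatLS]$.

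\textbf{Quadratic structure.} From Eq.~\eqref{eq:fLiu_as_linear_of_ols}, $\bhatd=\Ad\bhatLS$. Set $\mat{G}:=\SQinv\Qmat$ and $t=1-d$. Since $\SQinv(\Smat+d\Qmat)=\Ip-(1-d)\SQinv\Qmat$, we have $\Ad=\Ip-t\mat{G}$. Hence
\[
g = \E\bigl\|(\bhatLS-\b)-t\mat{G}\bhatLS\bigr\|_2^2
= \MSE[\bhatLS]-2t\,\E\bigl\langle \mat{G}\bhatLS,\bhatLS-\b\bigr\rangle + t^2\,\E\|\mat{G}\bhatLS\|_2^2 ,
\]
which is a convex quadratic in $t$. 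Strict improvement for small $t>0$ is therefore equivalent to $c:=\E\langle \mat{G}\bhatLS,\bhatLS-\b\rangle>0$. If $c>0$, the minimizer $t^\star=c/\E\|\mat{G}\bhatLS\|^2$ (truncated to $(0,1]$) gives an explicit choice of $d$, and this also anticipates the plug-in rule mentioned in the abstract.

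\textbf{Computing $c$.} Let $\mat{P}=\Zp\Zmat$. Then $\bhatLS=\mat{P}\b+\Sp\Zmat^\top\vec\varepsilon$, and the cross terms vanish because $\E\vec\varepsilon=\vec 0$. This gives
\[
c=\sigma^2\tr(\mat{G}\Sp)+\bigl\langle \mat{G}\mat{P}\b,(\mat{P}-\Ip)\b\bigr\rangle .
\]

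\emph{Invertible $\Smat$.} When $\Smat$ is invertible, $\mat{P}=\Ip$ and only the trace term remains. Using
\[
\SQinv\Qmat\Smat^{-1}=\bigl(\Smat\Qmat^{-1}(\Smat+\Qmat)\bigr)^{-1}=(\Smat\Qmat^{-1}\Smat+\Smat)^{-1},
\]
the matrix $\mat{G}\Smat^{-1}$ is symmetric positive definite. Its trace is therefore positive, so $c>0$ whenever $\sigma>0$.

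\emph{Singular $\Smat$.} In the singular case I would restrict the same identity to $\operatorname{range}(\Smat)$ to show $\sigma^2\tr(\mat{G}\Sp)>0$ whenever $\Zmat\neq\mat{0}$. The degenerate case $\Zmat=\mat{0}$ must be excluded: there $\bhatd\equiv\vec 0$ for every $d$, so strict inequality fails and a nondegeneracy assumption on $\Zmat$ is needed.

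\textbf{Main obstacle.} The hard step is the deterministic cross term $\langle \mat{G}\mat{P}\b,(\mat{P}-\Ip)\b\rangle$ when $\Smat$ is singular. The first thing to try is to show it vanishes or is nonnegative. For $\vec u\in\ker\Smat$ one has $(\Smat+\Qmat)\vec u=\Qmat\vec u$, and I would use this to relate $(\Ip-\mat{P})\mat{G}^\top$ to $(\Ip-\mat{P})$. If that fails and the term can be negative, I would check whether the theorem needs extra hypotheses, such as $\b\in\operatorname{range}(\Smat)$ or $\sigma^2$ large relative to $\|(\Ip-\mat{P})\b\|$. Alternatively, I would look for a direct comparison at a different point, for instance $d=0$, using the fact that the null-space component of $\bhatLS$ is zero.
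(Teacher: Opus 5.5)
For invertible $\Smat$ your argument is complete and is the paper's argument in different coordinates. Both show that the quadratic $g$ has $g'(1)>0$. Your slope $2\sigma^2\tr(\mat{G}\Smat^{-1})$ equals the paper's $c_1+2c_2$, because $\SQinv+\Smat^{-1}\Qmat\SQinv=\Smat^{-1}$. The paper gets positivity from $c_1>0$ and $c_2\ge 0$; you get it from the fact that $\mat{G}\Smat^{-1}=(\Smat\Qmat^{-1}\Smat+\Smat)^{-1}$ is symmetric positive definite. Both arguments need $\sigma>0$, which the statement leaves implicit.

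The singular case, where your proposal stops, is a genuine gap, but no argument can close it. Your suspicion about the cross term is correct: the theorem as stated is false when $\Smat$ is singular, and your $\Zmat=\mat{0}$ example is only the most degenerate instance. Here is a nondegenerate counterexample.
\begin{itemize}
\item Take $n=1$, $m=2$, $\Zmat=(1\;\;0)$, $\sigma=1$ and $\b=(2,2)^\top$.
\item Take $\lambda=1$, $\alpha=\tfrac12$ and $\Rmat=\tfrac12\begin{pmatrix}1&1\\1&1\end{pmatrix}$, so $\mat{0}\preceq\Rmat\preceq\Ip$ and $\Qmat=\tfrac14\begin{pmatrix}3&1\\1&3\end{pmatrix}\succ\mat{0}$.
\item Then $\bhatLS=(y,0)^\top$ with $y=2+\varepsilon$, and $\mat{G}=\SQinv\Qmat=\begin{pmatrix}2/5&0\\1/5&1\end{pmatrix}$.
\item In your notation, $\sigma^2\tr(\mat{G}\Sp)=\tfrac25$ and $\langle\mat{G}\mat{P}\b,(\mat{P}-\Ip)\b\rangle=-\tfrac45$, so $c=-\tfrac25<0$.
\end{itemize}
Computing directly,
\[
\MSE[\bhatd]=5+\tfrac45(1-d)+(1-d)^2>5=\MSE[\bhatLS]\qquad\text{for every } d<1 .
\]
So no $d\in[0,1)$ works, and your fallback of comparing at $d=0$ fails as well.

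The paper's proof misses this for two reasons. It imports $\E[\bhatLS]=\b$ from Proposition~\ref{prop:moments}. Its expansion $(\Smat+d\Qmat)\Sp(\Smat+d\Qmat)=\Smat+2d\Qmat+d^2\Qmat\Sp\Qmat$ also uses $\Smat\Sp=\Ip$. Both facts hold only for invertible $\Smat$, so the paper in effect proves only the case you proved.

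What survives in the singular case is what your decomposition predicts.
\begin{itemize}
\item \textbf{Cross term.} With $\b_1=\mat{P}\b$ and $\b_2=(\Ip-\mat{P})\b$, the cross term is $-\b_2^\top\mat{G}\b_1$. It vanishes when $\b\in\operatorname{range}(\Smat)$, or when $\Qmat$ maps $\ker\Smat$ into itself (e.g.\ $\alpha=1$).
\item \textbf{Trace term.} This term is positive whenever $\Zmat\neq\mat{0}$, but restricting your identity to $\operatorname{range}(\Smat)$ is awkward, since $\SQinv$ does not preserve that subspace. A Schur complement is cleaner. Write $\Smat=\mat{U}\mat{\Lambda}\mat{U}^\top$ with $\mat{U}$ an orthonormal basis of $\operatorname{range}(\Smat)$. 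Block inversion gives $\mat{U}^\top\SQinv\mat{U}=(\mat{\Lambda}+\widetilde{\Qmat})^{-1}$, where $\widetilde{\Qmat}\succ\mat{0}$ is the Schur complement of $\Qmat$ on that block. Hence $\tr(\mat{G}\Sp)=\tr(\mat{\Lambda}^{-1})-\tr\bigl((\mat{\Lambda}+\widetilde{\Qmat})^{-1}\bigr)>0$.
\end{itemize}
A correct statement therefore needs $\sigma>0$ together with either $\Smat\succ\mat{0}$ or, in the singular case, $\Zmat\neq\mat{0}$ plus one of the two conditions above.
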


\begin{proof}
Considering $\MSE[\bhatd]=:g(d)$ as a function of $d$, the proof strategy is to show that the derivative, $g'(d)$, exists and is positive at $d=1$, hence it directly follows from the definition of the derivative that for some $\tilde{d}<1$ sufficiently close to $1$ that $g(\tilde{d})<g(1)$, and further that $\MSE[\bhatLS]=g(1)$ as seen in Remark \ref{rmk:d1}, and hence the result follows.

Using the results of Prop.~\ref{prop:moments} for the bias and variance, we calculate: 
\begin{align*}
\MSE[\bhatd] &:= \E \|\bhatd - \b\|_2^2 \\
&= \tr( \Var[\bhatd] ) + \big\| \mathrm{Bias}[\bhatd] \big\|_2^2 \\
&=\sigma^2\tr\left( \SQinv(\Smat+d\Qmat) \Smat^{+} (\Smat+d\Qmat)\SQinv\right) + (d-1)^2\|\SQinv\Qmat\b\|_2^2 \\
&=\sigma^2\tr\left( \SQinv(\Smat+2d\Qmat + d^2\Qmat\Smat^{+} \Qmat )\SQinv\right) + (d-1)^2\|\SQinv\Qmat\b\|_2^2 \\
&=\underbrace{\sigma^2\tr\left( \SQinv\Smat\SQinv\right)}_{c_0}
+d\underbrace{2\sigma^2\tr\left( \SQinv\Qmat\SQinv \right)}_{c_1} \\
&\quad\ldots +d^2\underbrace{\sigma^2\tr\left( \SQinv \Qmat\Smat^{+}\Qmat \SQinv \right)}_{c_2} + (d-1)^2\underbrace{\|\SQinv\Qmat\b\|_2^2}_{c_3} \\
&= c_0 + c_1 d + c_2 d^2 + c_3 (d-1)^2
\end{align*}
so we see that $g$ is a quadratic polynomial, hence differentiable, with derivative 
\[
g'(d) = c_1 + 2d c_2 + 2 c_3(d-1)
\]
so $g'(1) = c_1+2c_2$. Since $\Qmat\succ \mat{0}$ and $\Smat\succeq \mat{0}$ then $\SQinv \Qmat\Smat^{+}\Qmat \SQinv\succeq 0$, so $c_2$ is the product of $d^2$ and the trace of a positive semidefinite matrix, hence $c_2\ge 0$, and likewise $c_1>0$ for similar reasons, thus $g'(1)>0$.
\end{proof}

From the proof of the theorem is the following immediate corollary:
\begin{corollary}[MSE is a quadratic] \label{cor:quadratic}
Assume the conditions of Theorem~\ref{thm:risk_fliu} hold. For fixed $S$, $Q$, $b$, and $\sigma^2$, define 
$$ g(d) =     \MSE[\bhatd] := \E \|\bhatd - \b\|_2^2, \quad d\in\R.$$
Then $g$ is a quadratic function and can be written as 
\[
g(d)=(c_2+c_3)d^2+(c_1-2c_3)d+(c_0+c_3),
\]
where
\begin{align*}
c_0&=\sigma^2 \operatorname{tr}\left(\SQinv \Smat\SQinv \right), \\
c_1&=2\sigma^2\tr\left( \SQinv\Qmat\SQinv \right), \\
c_2&=\sigma^2\tr\left( \SQinv \Qmat\Smat^{+}\Qmat \SQinv \right), \\
c_3&=\|\SQinv\Qmat\b\|_2^2.
\end{align*}
\end{corollary}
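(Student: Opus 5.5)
The plan is to reuse the expansion from the proof of Theorem~\ref{thm:risk_fliu}, now for arbitrary $d\in\R$, and then collect powers of $d$. Proposition~\ref{prop:moments} gives the mean, bias and variance for any real $d$ under the standing assumption $\Qmat\succ\mat{0}$, which makes $\SQinv$ exist. So I will begin from the bias--variance identity
\[
\MSE[\bhatd]=\tr\bigl(\Var[\bhatd]\bigr)+\bigl\|\mathrm{Bias}[\bhatd]\bigr\|_2^2 .
\]
Into it I substitute $\Var[\bhatd]=\sigma^2\Ad\Smat^{+}\Ad^\top$ and $\mathrm{Bias}[\bhatd]=(d-1)\SQinv\Qmat\b$.

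The key algebraic step is expanding the middle factor
\[
(\Smat+d\Qmat)\Smat^{+}(\Smat+d\Qmat)=\Smat\Smat^{+}\Smat+d\bigl(\Qmat\Smat^{+}\Smat+\Smat\Smat^{+}\Qmat\bigr)+d^2\Qmat\Smat^{+}\Qmat .
\]
The constant term uses $\Smat\Smat^{+}\Smat=\Smat$. For the linear term, the proof of the theorem wrote it as $2d\Qmat$. That is exact when $\Smat$ is invertible. In general, $\Smat\Smat^{+}$ is only the orthogonal projector onto the range of $\Smat$.

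I expect this linear term to be the one genuine obstacle. To handle it, I will note that $\Smat\Smat^{+}=\Smat^{+}\Smat=:\mat{P}$ is symmetric. Using invariance of the trace under cyclic permutation,
\[
\tr\bigl(\SQinv(\Qmat\mat{P}+\mat{P}\Qmat)\SQinv\bigr)=2\tr\bigl(\SQinv\Qmat\mat{P}\SQinv\bigr),
\]
which is linear in $d$ with a coefficient that agrees with $c_1$ whenever $\Smat\succ\mat{0}$. I will either state the corollary in that setting, following the theorem's computation as written, or replace $c_1$ by this projected version. In both cases the linear coefficient is independent of $d$, and that is all the quadratic structure needs.

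With the variance written as $c_0+c_1d+c_2d^2$ and the bias as $c_3(d-1)^2$, the last step is routine. I expand $(d-1)^2=d^2-2d+1$ and group terms, giving
\[
g(d)=(c_2+c_3)d^2+(c_1-2c_3)d+(c_0+c_3).
\]
Each $c_i$ depends only on $\Smat,\Qmat,\b,\sigma^2$, so $g$ is a quadratic polynomial in $d$ on all of $\R$. As a final remark, $c_2\ge 0$ and $c_3\ge 0$ because they are traces and norms of positive semidefinite quantities. Hence the leading coefficient is nonnegative and $g$ is convex, which is the property used later for the optimal $d$.
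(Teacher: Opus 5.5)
Your route is the paper's own: the corollary is read off from the expansion in the proof of Theorem~\ref{thm:risk_fliu}, namely the bias--variance identity with the moments of Proposition~\ref{prop:moments}, followed by collecting powers of $d$. You are right that collapsing $\Qmat\Sp\Smat+\Smat\Sp\Qmat$ to $2\Qmat$ requires $\Smat\Sp=\Ip$, and the paper does this silently. If you take your first option and add the hypothesis $\Smat\succ\mat{0}$ (equivalently, $\Zmat$ of full column rank), your argument coincides with the paper's and is correct. A minor point: your cross-term trace identity follows from $\tr(\mat{X})=\tr(\mat{X}^\top)$ and symmetry of the factors, not from cyclicity alone, but it is true.

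Your fallback has a genuine gap. That fallback keeps $\Smat$ singular and only replaces $c_1$ by $2\sigma^2\tr(\SQinv\Qmat\mat{P}\SQinv)$. It takes the bias $(d-1)\SQinv\Qmat\b$ from Proposition~\ref{prop:moments} as valid whenever $\Qmat\succ\mat{0}$. But that formula was derived from $\E[\bhatLS]=\b$, which by Eq.~\eqref{OLS_Mean} holds only for invertible $\Smat$. In general $\E[\bhatLS]=\Zp\Zmat\b=\mat{P}\b$, and using $\Smat\mat{P}=\Smat$ one gets
\[
\E[\bhatd]=\SQinv(\Smat+d\Qmat\mat{P})\b,\qquad \mathrm{Bias}[\bhatd]=\SQinv\Qmat(d\mat{P}-\Ip)\b .
\]
So the squared bias has $d^2$- and $d$-coefficients $\|\SQinv\Qmat\mat{P}\b\|_2^2$ and $-2\,\b^\top\Qmat(\Smat+\Qmat)^{-2}\Qmat\mat{P}\b$, not $c_3$ and $-2c_3$; only the constant term $c_3$ survives.

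A consistency check exposes this. Your fallback formula at $d=1$ gives $\sigma^2\tr(\Sp)$, whereas Eq.~\eqref{OLS_MSE} gives $\MSE[\bhatLS]=\sigma^2\tr(\Sp)+\|\Zp\Zmat\b-\b\|_2^2$. Concretely, take $\Smat=\operatorname{diag}(1,0)$ and $\Qmat=\Ip$, which satisfy the theorem's hypotheses. Then $\Ad=\operatorname{diag}\bigl(\tfrac{1+d}{2},d\bigr)$, and the true risk is $g(d)=\sigma^2(1+d)^2/4+(d-1)^2b_1^2/4+b_2^2$. The stated formula gives $\sigma^2(1/4+5d/2+d^2/4)+(d-1)^2(b_1^2/4+b_2^2)$. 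Your projected $c_1$ fixes the $\sigma^2$ part but leaves the spurious $(d-1)^2b_2^2$ term.

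So $g$ is still a quadratic for singular $\Smat$, but with these corrected coefficients. The explicit formula with the stated $c_i$ needs $\Smat$ invertible. The paper's one-line derivation uses that assumption implicitly, in both the variance cross term and the bias.
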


Theorem \ref{thm:risk_fliu} shows that the mean squared error depends explicitly on the Liu parameter $d$, suggesting that appropriate tuning of $d$ is important. We next characterize the optimal choice of $d$ under the derived risk expression.

\begin{proposition}[Convexity of the MSE in $d$ and the optimal Liu parameter]
\label{prop:d-opt_convexity}
Assume the conditions of Theorem~\ref{thm:risk_fliu} hold
and define $g$ as in Corollary~\ref{cor:quadratic}. 
If $\Qmat\neq \mat{0}$ and $\b\neq \vec{0}$, then $g(d)$ is strongly convex and therefore admits a unique global minimizer. The unconstrained minimizer is
\[
d_{\mathrm{opt}}=\frac{2c_3-c_1}{2(c_2+c_3)}
\]
and $d_\text{opt} \le 1$, and furthermore $d_\text{opt} < 1$ if $\Qmat \succ \mat{0}$.
\end{proposition}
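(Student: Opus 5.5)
The proof starts from the quadratic representation in Corollary~\ref{cor:quadratic}, $g(d)=(c_2+c_3)d^2+(c_1-2c_3)d+(c_0+c_3)$. A quadratic is strongly convex exactly when its leading coefficient is strictly positive. So the first step is to show $c_2+c_3>0$. We already know $c_2\ge 0$, since it is $\sigma^2$ times the trace of the positive semidefinite matrix $\SQinv\Qmat\Smat^{+}\Qmat\SQinv$, as argued in the proof of Theorem~\ref{thm:risk_fliu}. For $c_3$: under the standing assumptions of Theorem~\ref{thm:risk_fliu}, $\Qmat\succ\mat{0}$ and $\SQ\succ\mat{0}$, so $\SQinv\Qmat$ is a product of two invertible matrices. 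Hence $\SQinv\Qmat\b\neq\vec 0$ whenever $\b\neq\vec 0$, which gives $c_3=\|\SQinv\Qmat\b\|_2^2>0$. Then $g''(d)=2(c_2+c_3)>0$ is a positive constant, so $g$ is strongly convex with a unique global minimizer.

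This is where I expect the main subtlety. If only $\Qmat\neq\mat{0}$ were assumed, rather than $\Qmat\succ\mat{0}$, then $\b$ could lie in the null space of $\Qmat$, making $c_3=0$; and $c_2$ could vanish if $\Qmat\Smat^{+}\Qmat=\mat{0}$. I would therefore lean explicitly on the hypotheses of Theorem~\ref{thm:risk_fliu}, which force $\Qmat\succ\mat{0}$ and make $c_3>0$ immediate. If I wanted the weaker hypothesis, a fallback is to show $c_2>0$ whenever $\Qmat$ has a component in the range of $\Smat$. That route is messier, so I would not pursue it.

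Next, setting $g'(d)=2(c_2+c_3)d+(c_1-2c_3)=0$ gives the stationary point $d_{\mathrm{opt}}=(2c_3-c_1)/(2(c_2+c_3))$. By strong convexity this is the unique minimizer, and the denominator is nonzero by the first step.

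Finally, compare $d_{\mathrm{opt}}$ with $1$:
\[
1-d_{\mathrm{opt}}=\frac{2c_2+2c_3-2c_3+c_1}{2(c_2+c_3)}=\frac{c_1+2c_2}{2(c_2+c_3)}.
\]
The numerator is $g'(1)$, and the denominator is positive. Since $c_1=2\sigma^2\tr(\SQinv\Qmat\SQinv)\ge 0$ and $c_2\ge 0$, we get $d_{\mathrm{opt}}\le 1$. When $\Qmat\succ\mat{0}$, the matrix $\SQinv\Qmat\SQinv$ is positive definite, so its trace is strictly positive. Hence $c_1>0$ (assuming $\sigma^2>0$), and $d_{\mathrm{opt}}<1$. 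This matches the sign of $g'(1)$ found in Theorem~\ref{thm:risk_fliu}.
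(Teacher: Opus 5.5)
Your proof is correct and follows the paper's argument: start from the quadratic in Corollary~\ref{cor:quadratic}, get strong convexity from $c_2+c_3>0$, solve $g'(d)=0$, and compare with $1$, where your identity $1-d_{\mathrm{opt}}=(c_1+2c_2)/(2(c_2+c_3))$ is just a rearrangement of the paper's chain $d_{\mathrm{opt}}\le c_3/(c_2+c_3)\le 1$. Your justification of $c_3>0$, via invertibility of $\SQinv\Qmat$ under $\Qmat\succ\mat{0}$, is more careful than the paper's, which derives $c_3>0$ from $\Qmat\neq\mat{0}$ and $\b\neq\vec{0}$ alone; that reasoning would not suffice if $\b\in\ker\Qmat$.
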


\begin{proof}
Taking the derivative of $\MSE[\bhatd]$ yields 
\[
g'(d)=2(c_2+c_3)d+(c_1-2c_3),
\qquad
g''(d)=2(c_2+c_3).
\]
Thus, since $\Qmat\neq \mat{0}$ and $\b\neq \vec{0}$, it follows that $c_3 >0$ so $c_2+c_3>0$ and thus $g$ is strongly convex and has a unique global minimizer. 
Solving for $d$ to satisfy $g'(d)=0$ 
yields
\begin{equation}
\label{plug-in estimator-d}
d_{\mathrm{opt}}=\frac{2c_3-c_1}{2(c_2+c_3)}.   
\end{equation}
Moreover,
\begin{align*}
d_{\mathrm{opt}} = \frac{2c_3-c_1}{2(c_2+c_3)} 
&= \frac{c_3}{c_2+c_3} - \frac{c_1}{2(c_2+c_3)} \\
&\le \frac{c_3}{c_2+c_3} \\
&\le 1
\end{align*}
since \(c_2\ge 0\), \(c_3\ge 0\), and \(c_2+c_3>0\).
Whenever $c_1>0$ or $c_2 > 0$, we see that $d<1$.
The condition $\Qmat \succ \mat{0}$ guarantees $c_1>0$ and $c_2>0$.
\end{proof}

\subsubsection{Plug-in tuning rule for \texorpdfstring{$d$}{d}}
Proposition~\ref{prop:d-opt_convexity} gives the risk-optimal Liu parameter in closed form when the unknown quantities are available. In practice, the coefficients $c_0,c_1,c_2$ and $c_3$ are unknown because they depend on the unknown quantities $\b$ and $\sigma^2$. One approach is to estimate these coefficients by plugging in estimates for $\b$ and $\sigma^2$.

For fixed $(\lambda,\alpha)$, let
\[
\Qmat = \Q
\qquad
\bhat_\cardot=\SQinv \Zmat^\top \y,
\]
where $\hatb_{\mathrm{gen-ridge}}$ is the generalized ridge estimator. Let $\hat\sigma^2$ be an estimate of the error variance. We then define
\begin{align*}
c_1&=2\hat\sigma^2\tr\left( \SQinv\Qmat\SQinv \right), \\
c_2&=\hat\sigma^2\tr\left( \SQinv \Qmat\Smat^{+}\Qmat \SQinv \right), \\
c_3&=\left\|\SQinv\Qmat\bhat_\cardot\right\|_2^2.
\end{align*}

The plug-in estimator of the optimal Liu parameter is obtained by replacing $c_1,c_2,c_3$ in \eqref{plug-in estimator-d} by their empirical counterparts. Thus,
\[
\hat d_{\mathrm{plug}}
=
\frac{2\hat c_3-\hat c_1}{2(\hat c_2+\hat c_3)}.
\]
This estimator provides a data-driven choice of the Liu shrinkage parameter for fixed \((\lambda,\alpha)\). As an additional practical alternative, one may enforce the classical Liu parameter restriction \(d\in[0,1]\) by projecting the unconstrained plug-in estimate onto the admissible interval; $\hat d_{\mathrm{plug}} \le 1$ happens by construction, but one may project it to be nonnegative if so desired, giving 
\[
\hat d_{\mathrm{proj}}
=
\max\left\{0,\hat d_{\mathrm{plug}}\right\}.
\]

\subsubsection{Under-determined case}
Our next theorem concerns choosing parameter fitting via the GCV or leave-one-out cross-validation (LOO-CV) techniques. While the functional Liu estimator is still well-defined in the under-determined case, we are able to show in Theorem~\ref{thm:degeneracy of gcv} that the GCV and LOO-CV criteria are constants with respect to $d$, hence they are not helpful criteria for choosing an appropriate $d$ value (and so $\hat d_{\mathrm{plug}}$ may be particularly helpful). This may partially explain why most prior works on the Liu estimator only considered the over-determined case.
\begin{theorem}[Degeneracy of GCV with respect to $d$]
\label{thm:degeneracy of gcv}
Let $\Zmat \in \R^{n\times m}$ have full row rank (i.e., $\rank(\Zmat)=n$, so necessarily $n\le m$) i.e., the underdetermined (high-dimensional) regime, and define $\Smat=\Zt\Zmat$ and $\bhatLS=\Zp \y$ where $\Zp$ is the Moore--Penrose pseudo-inverse.
Let $\Qmat \in \R^{m\times m}$ be fixed such that $\SQ$ is invertible. For $d\in\R$, define the functional Liu estimator 
\[
\bhatd=\SQinv\bigl(\Zt \y + d\,\Qmat \bhatLS\bigr),
\qquad
\fyd=\Zmat\bhatd.
\]
Then 
\[
\fyd=\Hd \y,
\qquad
\Hd=\In-(1-d)\Bmat,
\]
where the matrix 
\(
\Bmat=\Zmat\SQinv\Qmat\Zp,
\)
is independent of $d$. Consequently, for every $d\neq 1$, 
\[
\mathrm{GCV}(d)
:=
\frac{\|\y-\fyd\|^2}{\bigl(n-\tr(\Hd)\bigr)^2}
= \frac{\|\Bmat\y\|^2}{\tr(\Bmat)^2}
\]
is independent of $d$, and the PRESS/LOO-CV criterion
\[
\mathrm{PRESS}(d)
:=
\sum_{i=1}^{n}
\left(
\frac{y_i-\hat y_{d,i}}{1-H_{d,ii}}
\right)^2
\]
is also independent of $d$ whenever defined, where $H_{d,ii}$ denotes the $(i,i)$ entry of $\Hd$.
\end{theorem}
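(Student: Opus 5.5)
The plan is to express $\Hd$ algebraically using only the fact that $\Zmat$ has full row rank, which gives $\Zmat\Zp=\In$. First write $\bhatd=\SQinv(\Zt\y+d\Qmat\Zp\y)$. The key observation is that $\Zt = \Zt\Zmat\Zp = \Smat\Zp$: since $\Zp\Zmat$ is the orthogonal projector onto the row space of $\Zmat$, we have $(\Zp\Zmat)^\top=\Zp\Zmat$ and $\Zmat\Zp\Zmat=\Zmat$, hence $\Zt=(\Zmat\Zp\Zmat)^\top=\Zp\Zmat\Zt$. Alternatively, using $\Zmat\Zp=\In$, we get $\Zt\Zmat\Zp=\Zt$ directly. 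This gives
\[
\bhatd=\SQinv(\Smat+d\Qmat)\Zp\y=\SQinv\bigl((\SQ)-(1-d)\Qmat\bigr)\Zp\y=\Zp\y-(1-d)\SQinv\Qmat\Zp\y .
\]
Multiplying by $\Zmat$ and using $\Zmat\Zp=\In$ then yields $\fyd=\y-(1-d)\Bmat\y$, so $\Hd=\In-(1-d)\Bmat$ with $\Bmat=\Zmat\SQinv\Qmat\Zp$. The matrix $\Bmat$ visibly does not involve $d$. Note that the expression $\Smat^{-}$ in Eq.~\eqref{eq:smoother_matrix} should be read as $\Sp$ with $\Zp=\Sp\Zt$; I will work directly with $\Zp$ to avoid this ambiguity.

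For GCV, I will compute $\y-\fyd=(1-d)\Bmat\y$ and $n-\tr(\Hd)=(1-d)\tr(\Bmat)$. For $d\neq1$, the factor $(1-d)^2$ cancels between the numerator and the denominator, leaving $\|\Bmat\y\|^2/\tr(\Bmat)^2$. This is defined whenever $\tr(\Bmat)\neq0$, which I will state as the implicit "whenever defined" condition.

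For PRESS, the $i$th residual is $y_i-\hat y_{d,i}=(1-d)(\Bmat\y)_i$ and $1-H_{d,ii}=(1-d)B_{ii}$. For $d\ne1$ and $B_{ii}\neq0$, each ratio equals $(\Bmat\y)_i/B_{ii}$, so the sum is independent of $d$. At $d=1$ the ratios are $0/0$, and that case is excluded by the "whenever defined" clause.

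I expect the only real obstacle to be justifying $\Zt=\Smat\Zp$ and $\Zmat\Zp=\In$, since everything else is routine cancellation. I will handle this with the Moore--Penrose identities: full row rank gives $\Zp=\Zt(\Zmat\Zt)^{-1}$, so $\Zmat\Zp=\In$ immediately, and $\Smat\Zp=\Zt\Zmat\Zt(\Zmat\Zt)^{-1}=\Zt$.
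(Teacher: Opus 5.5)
Your proposal is correct and follows essentially the same route as the paper: rewrite $\Zt\y=\Smat\Zp\y$, split $\Smat+d\Qmat=(\SQ)-(1-d)\Qmat$, use $\Zmat\Zp=\In$ from full row rank, and cancel the common factor $(1-d)^2$ in GCV and PRESS. One small slip is in your first justification: transposing $\Zmat\Zp\Zmat=\Zmat$ with $(\Zp\Zmat)^\top=\Zp\Zmat$ gives $\Zt=\Zp\Zmat\Zt$, not $\Zt=\Zt\Zmat\Zp$ (for the latter you need $(\Zmat\Zp)^\top=\Zmat\Zp$), but your alternative argument via $\Zp=\Zt(\Zmat\Zt)^{-1}$ is correct and suffices.
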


\begin{proof}
Since $\bhatLS=\Zp\y$ is a solution to the normal equations, it means that $\Smat\bhatLS=\Zt\y$, so 
using this in the definition of $\bhatd$,
\begin{align*}
\bhatd
&=\SQinv\bigl(\Zt \y + d\,\Qmat \bhatLS\bigr) \\
&=\SQinv(\Smat\bhatLS + d\Qmat\bhatLS) \\
&=\SQinv(\Smat+d\Qmat)\bhatLS \\
&=\SQinv(\Smat+d\Qmat)\Zp\y.
\end{align*}
Hence the fitted values are
\begin{align*}
\fyd=\Zmat\bhatLS &=\Zmat\SQinv(\Smat+d\Qmat)\Zp\y \\
&= \Zmat\SQinv\bigl(\SQ-(1-d)\Qmat\bigr)\Zp\y \\
&= \Zmat\SQinv(\SQ)\Zp\y
-(1-d)\underbrace{\Zmat\SQinv\Qmat\Zp}_\Bmat\y \\
&= \Zmat\Zp\y - (1-d)\Bmat \y.
\end{align*}
By the assumption that $\Zmat$ has full row rank, then $\Zp=\Zt(\Zmat\Zt)^{-1}$, so $\Zmat\Zp=\In$, hence we have
\begin{equation}\label{eq:fyd}
\fyd=\Hd\y = \y - (1-d)\Bmat\y,\quad\text{where}\quad \Hd=\In-(1-d)\Bmat.
\end{equation}

Calculating the GCV and using Eq.~\eqref{eq:fyd} gives
\begin{align*}
\mathrm{GCV}(d) &= \frac{\|\y-\fyd\|^2}{\bigl(n-\tr(\Hd)\bigr)^2} \\
&= \frac{\| (1-d)\Bmat\y \|^2}{\bigl(n-\tr(\In-(1-d)\Bmat)\bigr)^2} \\
&= \frac{(1-d)^2\|\Bmat\y\|^2}{(1-d)^2\tr(\Bmat)^2}
\end{align*}
which, for $d\neq 1$, gives the desired result.

Calculating the LOO-CV error using the PRESS formula gives
\begin{align*}
    \mathrm{PRESS}(d)
&=
\sum_{i=1}^{n}
\left( \frac{y_i-\hat y_{d,i}}{1-H_{d,ii}}
\right)^2 \\
&=
\sum_{i=1}^{n}
\left( \frac{y_i-(y_i-(1-d)(\Bmat\y)_i)}{1-(1-(1-d)B_{ii})}
\right)^2 \\
&=
\sum_{i=1}^{n}
\frac{(1-d)^2(\Bmat\y)_i^2}{(1-d)^2B_{ii}^2}\\
&= \sum_{i=1}^{n}
\frac{(\Bmat\y)_i^2}{B_{ii}^2}
\end{align*}
whenever $d\neq 1$, where $B_{ii}$ denotes the $(i,i)$ element of the matrix $\Bmat$. Hence, whenever this is defined (i.e., when all $B_{ii}\neq 0$), it is independent of $d$.
\end{proof}

The following corollary is immediate:
\begin{corollary}[Special Case: Classical Liu estimator]
If $\Qmat=\Ip$, then
\[
\bhatd=\SQinv\bigl(\Zt\y+d\bhatLS\bigr),
\]
and
\[
\Hd=\In-(1-d)\Mmat,
\qquad
\Mmat=\Zmat(\Smat+\Ip)^{-1}\Zp.
\]
Thus, for $0\le d<1$, both GCV and PRESS are independent of $d$.
\end{corollary}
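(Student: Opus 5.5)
The plan is to specialize Theorem~\ref{thm:degeneracy of gcv} to $\Qmat=\Ip$ rather than recompute anything. First, $\Qmat=\Ip$ satisfies that theorem's only hypothesis on $\Qmat$: since $\Smat=\Zt\Zmat\succeq\mat{0}$, we have $\Smat+\Ip\succ\mat{0}$, so $\SQ$ is invertible. (This also matches $\Q$ with $\lambda=\alpha=1$, condition 1 of Remark~\ref{rmk:Q_posdef}.) The standing assumption of the theorem, that $\Zmat$ has full row rank, is inherited in the corollary's setting.

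Substituting $\Qmat=\Ip$ into the definition of $\bhatd$ gives $\bhatd=(\Smat+\Ip)^{-1}(\Zt\y+d\,\bhatLS)$. This is the displayed formula in the corollary, where $\SQinv$ is read as $(\Smat+\Ip)^{-1}$. The matrix $\Bmat=\Zmat\SQinv\Qmat\Zp$ from the theorem then becomes $\Zmat(\Smat+\Ip)^{-1}\Zp=\Mmat$, which does not depend on $d$. The identity $\Hd=\In-(1-d)\Bmat$ therefore reads $\Hd=\In-(1-d)\Mmat$.

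For the GCV and PRESS claims, I will invoke the conclusion of Theorem~\ref{thm:degeneracy of gcv} for every $d\neq1$. Its proof shows that
\[
\mathrm{GCV}(d)=\frac{\|\Mmat\y\|^2}{\tr(\Mmat)^2},\qquad
\mathrm{PRESS}(d)=\sum_{i=1}^n\frac{(\Mmat\y)_i^2}{M_{ii}^2}.
\]
Neither expression involves $d$. The restriction $0\le d<1$ in the corollary lies inside $d\neq1$, so both criteria are constant on that interval.

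The only point needing care is well-definedness: the GCV denominator requires $\tr(\Mmat)\neq0$, and PRESS requires every $M_{ii}\neq0$. I would settle the first in full and qualify the second with ``whenever defined,'' as the theorem does. Since $\Zmat$ has full row rank, $\Zp=\Zt(\Zmat\Zt)^{-1}$. Let $\mat{G}=\Zmat\Zt\succ\mat{0}$. The push-through identity $\Zmat(\Zt\Zmat+\Ip)^{-1}\Zt=\mat{G}(\mat{G}+\In)^{-1}$ then gives
\[
\Mmat=\mat{G}(\mat{G}+\In)^{-1}\mat{G}^{-1}.
\]
This matrix is similar to $(\mat{G}+\In)^{-1}$, whose eigenvalues all lie in $(0,1)$. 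Hence $\tr(\Mmat)>0$ and the GCV expression is always defined.

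For the diagonal entries, $\Mmat$ need not be symmetric, so positivity of its eigenvalues does not force $M_{ii}>0$. If a fully unconditional PRESS statement were wanted, this would be the main obstacle. I expect to settle it by stating the PRESS claim under the same ``whenever defined'' proviso as the theorem.
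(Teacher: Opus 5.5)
Your argument is correct and is the paper's own: the paper calls this corollary ``immediate'' and obtains it, as you do, by setting $\Qmat=\Ip$ in Theorem~\ref{thm:degeneracy of gcv}, so that $\Bmat=\Zmat(\Smat+\Ip)^{-1}\Zp=\Mmat$ and the formulas for $\Hd$, GCV and PRESS follow.

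One correction concerns your side remark on well-definedness. Because $\mat{G}$ commutes with $(\mat{G}+\In)^{-1}$, your formula simplifies to $\Mmat=(\Zmat\Zt+\In)^{-1}$. That matrix is symmetric positive definite, so contrary to your claim $\Mmat$ is symmetric. Consequently every $M_{ii}>0$ and $\tr(\Mmat)>0$. Both GCV and PRESS are therefore always defined in this special case, so the ``whenever defined'' proviso, and the obstacle you anticipated, can be dropped.
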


\begin{remark}
The plug-in rule is particularly useful in situations where generalized cross-validation (GCV) or leave-one-out cross-validation (LOO-CV) do not vary with respect to $d$. In such cases, these criteria cannot guide the choice of the Liu parameter. The proposed plug-in expression provides a simple and interpretable way to select $d$ based on the estimated bias--variance trade-off of the estimator. In particular, Proposition~\ref{prop:d-opt_convexity} shows that, for fixed $(\lambda,\alpha)$, the selection of $d$ can be formulated as a one-dimensional risk minimization problem with an explicit solution. 
\end{remark}

\section{Empirical Analysis: Canadian Weather Data}
This empirical study has three primary objectives: (i) to evaluate the performance of the proposed functional Liu estimator (\texttt{fLiu}) in terms of estimation stability and predictive accuracy, and to compare it with existing methods such as OLS, Ridge, Liu, and the generalized ridge regression estimator (with roughness)  under a GCV-based tuning framework; (ii) to predict total annual precipitation at Canadian weather stations using the pattern of temperature variation over the year; and (iii) to investigate whether variations in climatic conditions across stations can be explained by the seasonal structure of temperature and precipitation when modeled as correlated functional predictors. 

The application uses the well-known Canadian weather data set, available in the \texttt{fda} package in \textsf{R}, originally used in  \citet{ramsay2002} and later used in \citet{RamsayHookerGraves2009}. In particular, the accompanying Canadian weather demonstration code by Ramsay\footnote{available at \href{https://www.psych.mcgill.ca/misc/fda/downloads/FDAfuns/R/demo/canadian-weather.R}{www.psych.mcgill.ca/misc/fda/downloads/FDAfuns/R/demo/canadian-weather.R}} served as an initial reference framework for the regularized functional ridge implementation later extended here to the proposed functional Liu estimator. All empirical analysis reported in this paper, including the Canadian weather application, were carried out using our \texttt{fliu.py} package in \texttt{Python} together with the \texttt{fda} package in \texttt{R} software.

\subsection{Data Description and Model Framework}

We illustrate the proposed \texttt{fLiu} procedure using the Canadian weather dataset. This dataset contains averaged annual cycles of monthly mean temperature and monthly precipitation for $35$ Canadian weather stations, obtained by averaging observations over the period 1960-1994. Consequently, each station is represented by two smooth climatic curves corresponding to temperature and precipitation, observed on a common annual domain. Let $\mathcal{T}$ denote the annual time domain. For station $i=1,\ldots,n$ with $n=35$, let $\text{Temp}_i(s)$ denote the temperature function and $\text{Prec}_i(s)$ denote the precipitation function. Following \citet{ramsay2002}, we focus on predicting total annual precipitation from the temperature trajectory. Let
\[
y_i=\text{LogPrec}_i
\]
denote the logarithm of total annual precipitation at station $i=1,\ldots,n$, and let
\[
X_{i}(s)=\text{Temp}_i(s)
\]
represent the functional predictor; this falls into our model Eq.~\eqref{eq:FLRM} and Eq.~\eqref{eq:Centred_model} using $p=1$ functional predictors (and hence we omit the $j$ index). The scalar-on-function regression model is then given by
\begin{equation}
\text{LogPrec}_i=\alpha+\int_{0}^{\mathcal{T}}\text{Temp}_i(s)\beta(s)\,ds+\varepsilon_i.
\end{equation}
Here, $\beta(s)$ is the unknown regression coefficient function, $\alpha$ is the intercept, and $\varepsilon_i$ are independent random errors with mean zero and constant variance. Due to seasonality, we chose to use the Fourier basis with $K$ basis functions. Since $p=1$ we have $m=K$.

For clarity, we distinguish three related design matrices used throughout the empirical analysis. 
In practice, functions are observed at a finite set of $T$ time points over the annual cycle (e.g., $T=365$ for daily observations, or $T=12$ for monthly observations),  
and we denote the raw discretized temperature data by $\Wmat\in\R^{n\times T}$, whose rows correspond to stations and whose columns correspond to the temporal observations. This matrix is used to examine the dependence structure among the original predictors. 
Second, $\Zmat\in\R^{n\times m}$ denotes the functional design matrix obtained by representing each temperature curve through the chosen basis system (the Fourier basis) as given by Eq.~\eqref{eq:defZ}. Finally, $\widetilde{\Zmat}=[\mat{1}\;\;\Zmat]\in\R^{n\times(m+1)}$ denotes the augmented design matrix formed by appending a column of ones to $\Zmat$, so that the intercept $\alpha$ is estimated jointly with the remaining regression coefficients. The estimators $\b$ will be vectors of size $m+1$ where the first coordinate is the intercept $\alpha$.

\begin{figure}[ht!]
    \centering
    \includegraphics[width=0.9\linewidth]{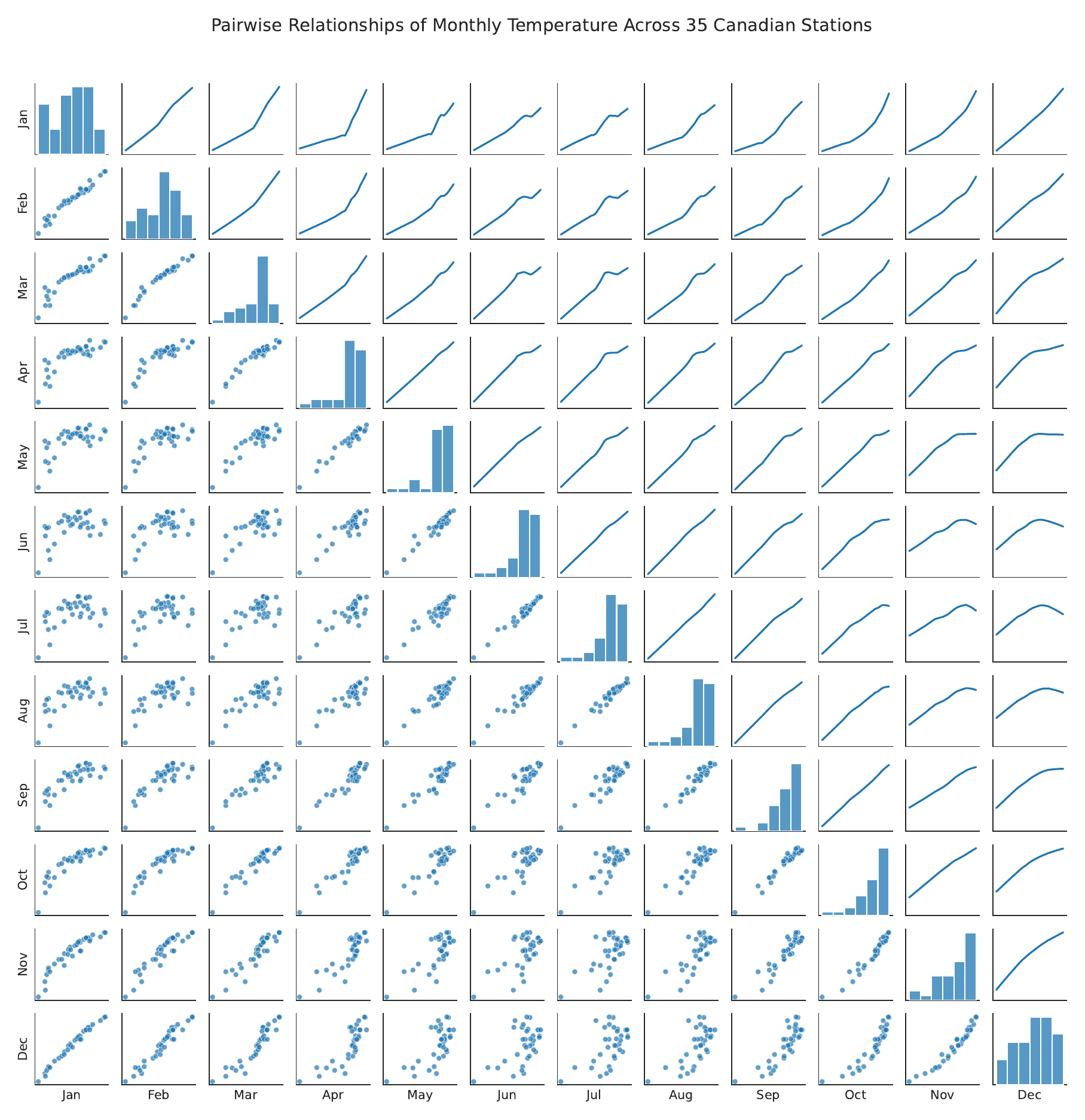}
    \caption{Monthly temperature variables in $\Wmat=(w_{ij})\in\mathbb{R}^{35\times 12}$, where $w_{ij}$ denotes the temperature recorded at station $i$ during month $j$, shown through a pairwise scatterplot matrix for the twelve months (Jan--Dec). 
    Each point represents one station observed jointly for the corresponding pair of months. Diagonal panels display the marginal distributions of each monthly variable, lower-triangular panels show pairwise scatterplots, and upper-triangular panels present smoothed trend lines between months.}
    \label{Correlation_pairplot_12x12}
\end{figure}

\subsubsection{Discrete analysis}
Before taking a FDA approach, we do an initial analysis using the raw data matrix $\Wmat$.
Each weather station reports daily data, so it is first tempting to use $T=365$ temperature covariates, but since there are only $n=35$ stations, this results in a hopelessly underdetermined regression and would give meaningless results. Thus we aggregate the daily data into monthly data and fit an OLS model to 
$T=12$ monthly temperature covariates, so $\Wmat\in\R^{n\times 12}$. The results indicate a relatively high coefficient of determination, with $R^2=0.877$ and adjusted $R^2=0.810$, suggesting a good overall fit. The overall model is statistically significant with $F=13.08$ (df $=22$) and $p<0.001$, and the residual standard error is $0.123$. However, despite the high $R^2$, several regression coefficients are not statistically significant, and the estimates exhibit instability due to strong multicollinearity among the predictors. This reflects a fundamental consequence of discretizing smooth functional data: as the number of discretization points increases, neighboring measurements become strongly dependent, which in turn leads to an ill-conditioned design matrix.

Figure~\ref{Correlation_pairplot_12x12} presents the pairwise relationships among the monthly temperature predictors, based on the matrix $\Wmat\in\R^{35\times12}$ whose rows correspond to stations and columns correspond to the twelve months. Each off-diagonal panel displays the joint observations for a pair of months with each marker showing one weather station. Strong positive associations are evident for adjacent months (e.g., January-February, June-July, and November-December), reflecting the smooth seasonal evolution of temperature over the annual cycle. More distant months also exhibit substantial dependence, often through nonlinear but monotone relationships induced by common climatic patterns. The diagonal histograms show seasonal shifts in the marginal distributions, with colder winter months and warmer summer months exhibiting distinct centers and spreads.

These observations indicate substantial dependence among the predictors arising from persistent inter-month relationships. This helps explain the instability of ordinary least squares estimates and motivates the use of regularized estimators such as ridge, functional ridge, and the proposed functional Liu method. This limitation highlights that direct discretization alone is not well suited for functional regression. Instead, dimension reduction through basis function representation and smoothness regularization is required. 

\subsubsection{FDA analysis}
In this study, we are considering a scalar-on-function linear regression model as given in Eq.~\eqref{eq:Centred_model}. Following the framework introduced in Section~\ref{sec:Methodology}, the temperature curves are represented using a Fourier basis expansion with $K=11$ basis functions, reflecting the periodic nature of the annual cycle. The basis-based design matrix $\Zmat$ defined in Eq.~\eqref{eq:defZ} and the coefficient vector defined in Eq.~\eqref{coeffvec_b}  are obtained from the Fourier basis representation of the predictor curves. For estimation with an intercept term, we use the corresponding augmented design matrix $\widetilde{\Zmat}$ introduced earlier.

For predictive assessment, the $35$ stations were randomly partitioned into a training sample of size $24$ and a test sample of size $11$. Let $\widetilde{\Zmat}_{\mathrm{train}}$ and $\widetilde{\Zmat}_{\mathrm{test}}$ denote the corresponding row-submatrices of $\widetilde{\Zmat}$ formed from the training and test observations, respectively. All model fitting and tuning were carried out using the training sample, while predictive performance was evaluated on the test sample. Using the augmented representation, these effects are further quantified through condition-number diagnostics of the design matrix. For the basis-based model, the condition number of the full-sample augmented design matrix is
\(
\kappa(\widetilde{\Zmat})=3090,
\)
while for the training design matrix it is
\(
\kappa(\widetilde{\Zmat}_{\mathrm{train}})=3387.
\)
The large condition number confirms substantial ill-conditioning and indicates the presence of multicollinearity among the predictors. 

Estimation is then carried out using the corresponding penalized functional regression formulation described in Section~\ref{sec:Methodology}, where a roughness penalty is imposed on the coefficient function $\beta(s)$ to control estimation instability and avoid overfitting.  
The resulting basis-expanded design matrix and penalty matrix that are subsequently used to compute the competing estimators  are given by
\begin{align*}
\widehat{\b}_{\text{LS}} &= \Sp\widetilde{\Zmat}^\top \y, 
&&\text{OLS}, \\
\widehat{\b}_{\text{ridge}} &= (\Smat+\lambda \Ipp)^{-1}\widetilde{\Zmat}^\top \y, 
&&\text{ridge regression},\\
\widehat{\b}_{\text{Liu}} &=
(\Smat+\Ipp)^{-1}
\left(\widetilde{\Zmat}^\top \y + d\,\widehat{\b}_{\text{LS}}\right), 
&&\text{(classical) Liu},\\
\widehat{\b}_{\cardot} &= (\Smat+\Q)^{-1}\widetilde{\Zmat}^\top \y, 
&&\text{generalized ridge with roughness},\\
\widehat{\b}_{\text{fLiu}} &= (\Smat+\Q)^{-1}
\left(\widetilde{\Zmat}^\top \y+d\,\Qmat\widehat{\b}_{\text{LS}}\right), 
&&\text{proposed fLiu}
\end{align*}
where $\Smat=\widetilde{\Zmat}^\top\widetilde{\Zmat}\in\R^{m+1 \times m+1}$, $\Rmat\in\R^{m\times m}$ is the roughness penalty matrix defined in subsection \ref{Basis and roughness penality} and $\Q=\lambda\bigl(\alpha \Ipp+(1-\alpha)\Rmat_0\bigr)$. 
Here, $\Rmat_0\in\R^{(m+1) \times (m+1)}$ is $\Rmat$ augmented with a leading row and column of zeros to account for the intercept (which we do not want to penalize), $\Rmat_0=
\begin{pmatrix}
0 & \mathbf{0}^\top\\
\mathbf{0} & \Rmat
\end{pmatrix}$. 
For tuning parameters, the smoothing parameters $\lambda$ and $\alpha$ and Liu-type biasing parameter $d$ are selected using the GCV criterion as implemented in the package \texttt{fLiu.py} mentioned earlier.  The parameter $\alpha$ was restricted to $[0,1]$, the parameter $\lambda$ was restricted to $[10^{-6},10^6]$ (and was log-transformed for the optimization), and $d$ was restricted to $[-1000,1]$.
The resulting GCV values and corresponding tuning parameters for each estimator are reported in Table~\ref{tab:model_comparison_gcv}.
\begin{table}[htbp]
	\centering
	\caption{GCV values and corresponding tuning parameters}
	\label{tab:model_comparison_gcv}

    \begin{tabular}{lccccc}
		\toprule
		Estimator & $\lambda$ & $d$ & $\alpha$ & $\mathrm{GCV}$ \\
		\midrule
		OLS    & --        & --      & --      & 0.0264 \\
		Ridge  & 0.001000   & --      & --      & 0.0264 \\
		Liu    & 0.000001   & $-749.75$ & --      & 0.0264 \\
        Generalized Ridge & 2.5526647   & --      & 0.0004 & 0.0202 \\
		\textbf{fLiu}   & 0.007900   & $-332.6659$ & 0.0029 & \textbf{0.0133} \\
		\bottomrule
	\end{tabular}
	
\end{table} 

Table~\ref{tab:model_comparison_gcv} reports the optimal tuning parameters together with the corresponding GCV values for the Canadian weather data across all estimators. Since GCV is primarily useful for relative comparison, the results indicate similar predictive performance for OLS, ridge, Liu, and generalized ridge, all of which yield comparable values. The proposed \texttt{fLiu} estimator attains the smallest GCV value among all competing methods, indicating the best relative predictive performance under the selected tuning parameters. 
\begin{figure}[ht]
    \centering
    \includegraphics[width=0.9\linewidth]{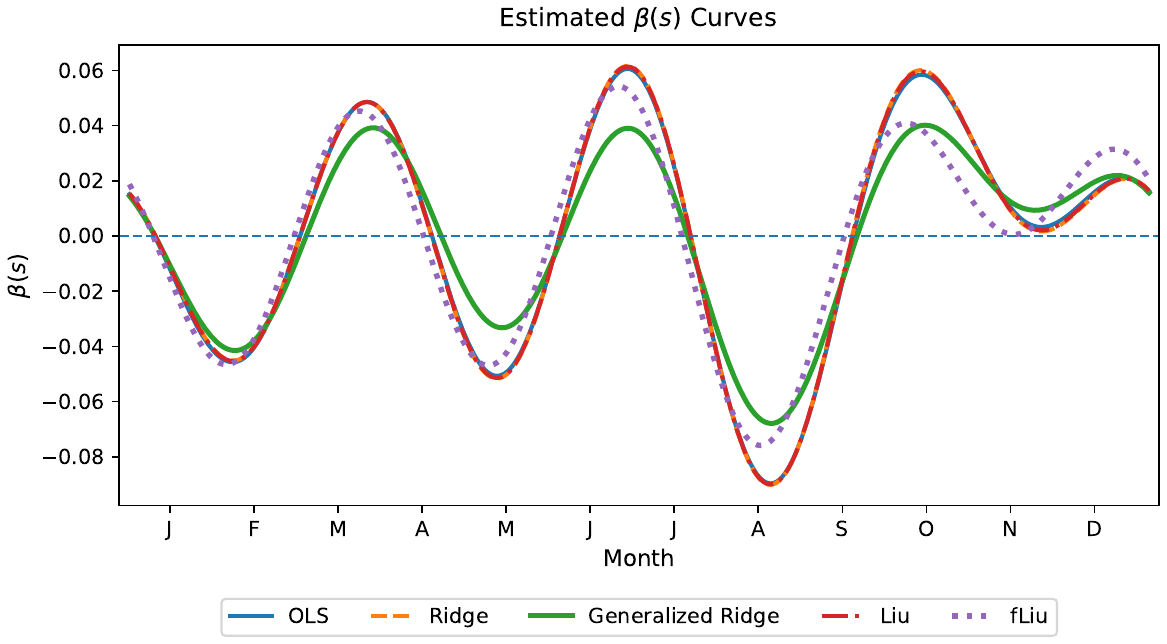}
    \caption{Estimated coefficient functions $\beta(s)$ for OLS, Ridge, Liu, Generalized Ridge, and \texttt{fLiu} over the annual cycle. All estimators exhibit similar seasonal patterns, while the Generalized Ridge estimator shows the largest deviation.}
    \label{fig:beta_2D_plot}
\end{figure}

Figure~\ref{fig:beta_2D_plot} displays the estimated coefficient functions for the monthly temperature predictor across the competing estimators. All methods recover a similar seasonal pattern, indicating a stable relationship between annual temperature trajectories and log annual precipitation. The curves alternate between positive and negative regions, suggesting varying seasonal effects on precipitation. Differences among estimators are modest overall, with \texttt{fLiu} showing stronger shrinkage by remaining closer to zero over much of the cycle, while the Generalized Ridge estimator also produces a comparatively smooth curve.
These results suggest that combining shrinkage and smoothness regularization can provide additional predictive gains over the standard alternatives. In particular, the superior GCV performance of \texttt{fLiu} suggests that the additional flexibility provided by its tuning parameters is beneficial.

\subsection{Results in the overdetermined case}
A central question of this empirical study is whether the proposed functional Liu-type estimator (\texttt{fLiu}) can improve predictive performance relative to conventional penalized approaches in scalar-on-function regression.
\begin{table}[ht]
	\centering
	\caption{Prediction accuracy (training and testing loss) for each estimator}
	\label{tab:prediction_accuracy}

    \begin{tabular}{lcc}
		\toprule
		Estimator & $\mathrm{Training\ loss}$ & $\mathrm{Testing\ loss}$ \\
		\midrule
		OLS    & 0.0066 & 0.0389 \\
		Ridge  & 0.0066 & 0.0374 \\
		Liu    & 0.0066 & 0.0378 \\
        Generalized Ridge & 0.0073 & 0.0371 \\
		\textbf{fLiu} & 0.0131 & \textbf{0.0203} \\
		\bottomrule
	\end{tabular}
\end{table}
Prediction accuracy across estimators is summarized in Table~\ref{tab:prediction_accuracy}. In predicting total annual precipitation from annual temperature variation, the estimators display different training--testing trade-offs. OLS attains low training loss but the highest testing loss, indicating overfitting without regularization. Ridge, generalized ridge, and Liu provide modest improvements over OLS, while the proposed \texttt{fLiu} estimator achieves the lowest testing loss. Although \texttt{fLiu} has a higher training loss, it delivers superior out-of-sample performance through combined shrinkage and smoothness regularization.

\begin{figure}[ht!]
\centering
\includegraphics[width=0.67\textwidth]{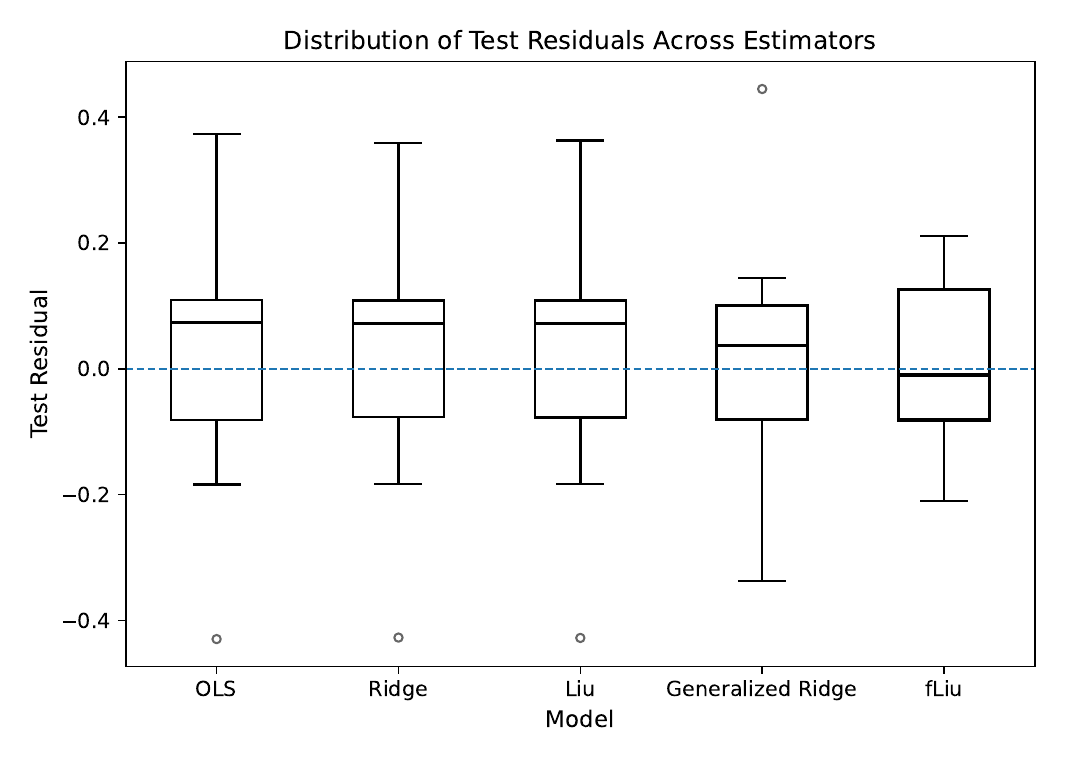}
\caption{Distribution of test residuals across the five competing estimators. 
The box plot summarizes the median, dispersion, and potential outliers of prediction errors on the test set. 
}
\label{fig:boxplot_residuals}
\end{figure}

Figure~\ref{fig:boxplot_residuals} displays the distribution of test residuals for the five competing estimators. 
While the numerical test-loss values summarize average prediction error, the box plot provides additional insight into the variability of the residuals, since estimators with lower variability give more stable performance.
 The fLiu estimator exhibits a comparatively tighter residual distribution, which supports the superior out-of-sample performance observed in the test-loss comparison.

From an application perspective, these results illustrate the effectiveness of penalized functional regression models in capturing the smooth temporal structure of temperature data. The proposed \texttt{fLiu} estimator demonstrates stable and interpretable behavior, while maintaining predictive performance comparable to existing approaches. Overall, it provides a flexible and well-balanced regularization strategy within the class of penalized functional regression methods.

\subsection{Underdetermined setting and degeneracy of GCV}
To illustrate the degeneracy result, we construct an underdetermined version of the Canadian weather model by increasing the number of basis functions so that the number of coefficients exceeds the sample size. In particular, we use a richer basis expansion so that $m = pK > n$ with $K=35$, placing the model in the high-dimensional (underdetermined) regime. In this setting, the GCV criterion is observed to be  constant with respect to $d$, indicating that it provides no guidance for selecting the Liu parameter. This behavior is consistent with Theorem~\ref{thm:degeneracy of gcv}, which shows that GCV becomes independent of $d$ in the underdetermined case. To address this issue, we use the proposed plug-in rule to obtain a data-driven estimate of $d$. This provides a well-defined choice of $d$ even when cross-validation criteria fail.
\begin{table}[ht]
\centering
\caption{Comparison of GCV-based and plug-in tuning for the fliu parameter}
\label{tab:degeneracy_weather}

\begin{tabular}{l c c c c}
\toprule
\multicolumn{5}{c}{\textbf{Overdetermined setting ($K=11$)}} \\
\midrule
Method & $\lambda$ & $\alpha$ & $d$ & Test loss \\
\midrule
GCV-based fLiu         & 0.0079 & 0.0029 & $-332.6660$ & 0.0203 \\
Plug-in tuned fLiu     & 0.0079 & 0.0029 & $-54.9469$  & 0.0339 \\
Projected plug-in fLiu & 0.0079 & 0.0029 & 0.0000    & 0.0388 \\
\midrule
\multicolumn{5}{c}{\textbf{Underdetermined setting ($K=35$)}} \\
\midrule
Method & $\lambda$ & $\alpha$ & $d$ & Test loss \\
\midrule
GCV-based fLiu         & 2.6781 & 0.0003 & 0.0000   & 0.0382 \\
Plug-in tuned fLiu     & 2.6781 & 0.0003 & $-0.0848$  & 0.0532 \\
Projected plug-in fLiu & 2.6781 & 0.0003 & 0.0000   & 0.0382 \\
\bottomrule
\end{tabular}
\end{table}

Table~\ref{tab:degeneracy_weather} shows that in the overdetermined setting both tuning approaches perform competitively, whereas in the underdetermined regime the plug-in rule substantially improves prediction accuracy relative to the GCV-based choice (where we arbitrarily chose $d=0$). The projected plug-in estimator, which restricts $d$ to the interval $[0,1]$, provides a useful benchmark and in some cases coincides with the GCV-based solution.

Figure~\ref{fig:gcv_parallel_both cases} highlights the sharp contrast between the two dimensional regimes. In the overdetermined setting ($K=11$), the GCV criterion changes noticeably across the range of $d$, indicating that cross-validation remains informative for selecting the Liu parameter. In contrast, when the basis dimension is increased to $K=35$, the problem becomes underdetermined and the GCV curve is essentially flat. Hence, the criterion provides no meaningful guidance for choosing $d$, which is consistent with the theoretical degeneracy established in Theorem~\ref{thm:degeneracy of gcv}. This illustrates the practical need for alternative tuning strategies, such as the proposed plug-in rule, in high-dimensional functional regression problems.
\begin{figure}[ht]
\centering

    \includegraphics[width=.48\textwidth]{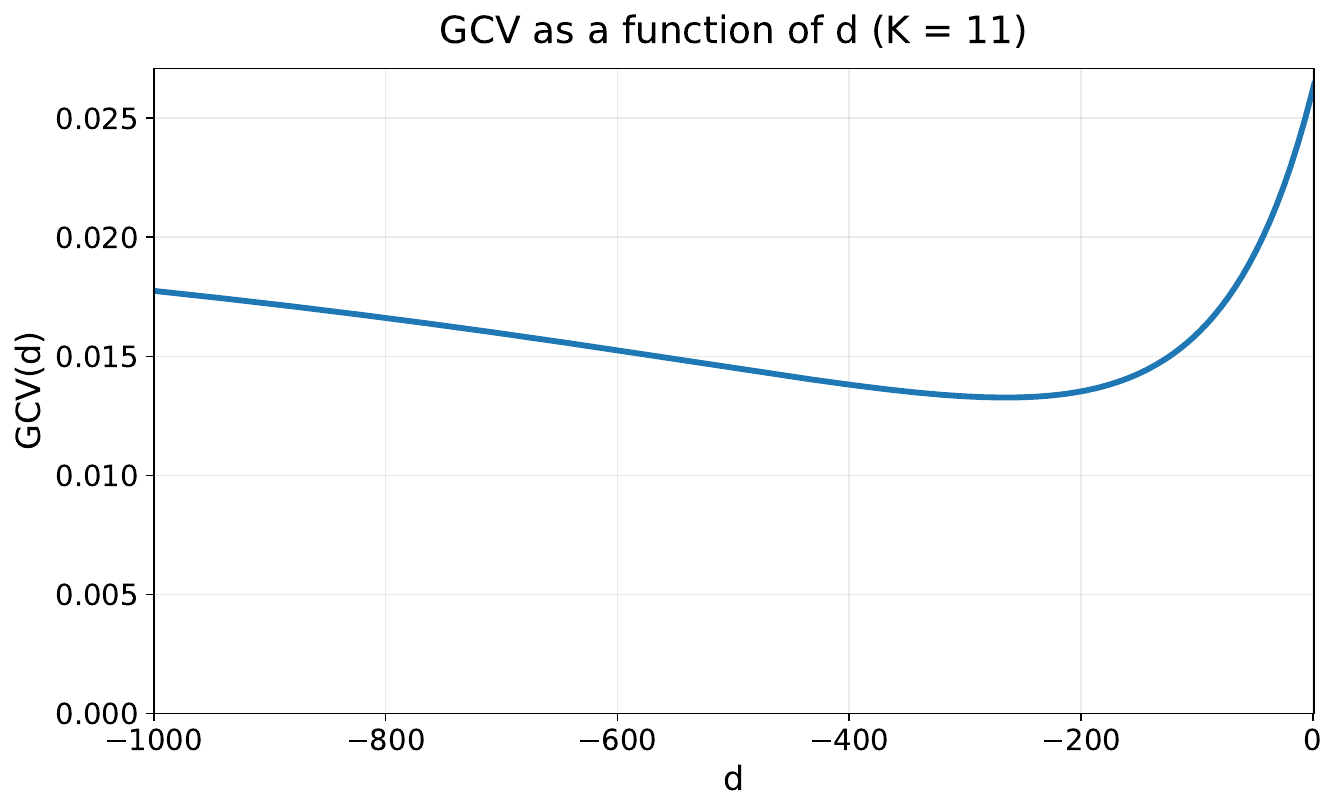}
    \centering
    \includegraphics[width=.48\textwidth]{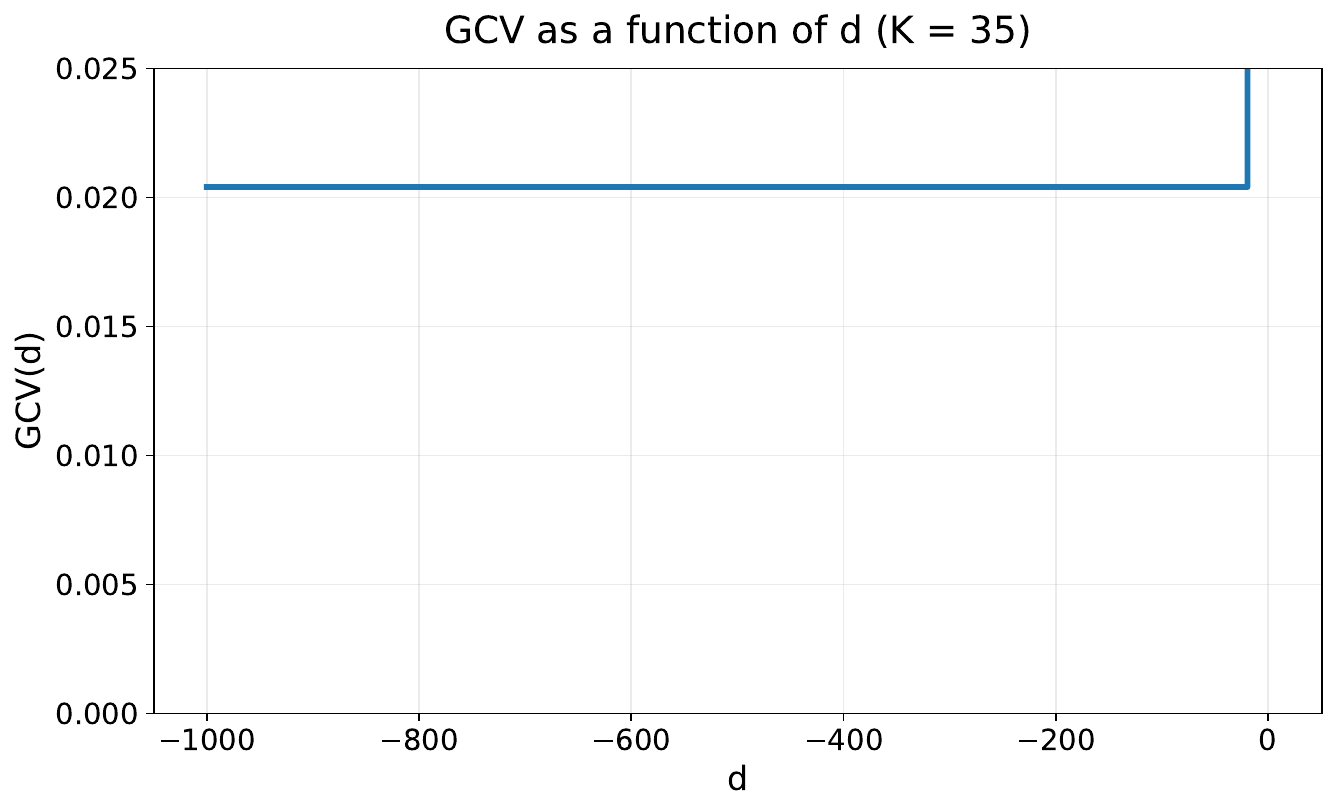}

\caption{GCV as a function of the Liu parameter $d$ for the Canadian Weather data. Left panel: over-determined setting ($K=11$), where the GCV criterion varies substantially with $d$. Right panel: under-determined setting ($K=35$), where the GCV curve is essentially constant, confirming the degeneracy result of Theorem~\ref{thm:degeneracy of gcv}.}
\label{fig:gcv_parallel_both cases}
\end{figure}

\section{Conclusion}

This paper proposes and analyzes a functional Liu-type shrinkage estimator (\texttt{fLiu}) for scalar-on-function regression, motivated by the instability induced by multicollinearity among discretized or basis-expanded functional predictors. The proposed methodology combines directional shrinkage with smoothness regularization, thereby extending classical Liu-type ideas to the functional regression setting. In addition to introducing the estimator, the study established theoretical and empirical results that clarify its behavior across different dimensional regimes.

From a theoretical perspective, the study derived an explicit mean squared error decomposition for the proposed estimator, clarifying the separate effects of variance reduction and shrinkage bias. The resulting risk was shown to be a convex quadratic function of the Liu parameter $d$, yielding a unique optimal choice and providing the basis for the proposed plug-in tuning rule.

A further contribution of the paper concerns parameter selection in functional Liu-type regression. A key challenge in developing Liu methodology for functional models is the choice of tuning parameters, particularly in high-dimensional (underdetermined) settings where the model dimension exceeds the sample size. We showed that, in such regimes, generalized cross-validation (GCV) and leave-one-out cross-validation (LOO-CV) become constant with respect to the Liu shrinkage parameter $d$, and therefore fail to provide useful guidance for its selection. By contrast, in the overdetermined setting these criteria vary meaningfully with $d$, allowing effective data-driven tuning. This contrast provides theoretical insight into the behavior of standard selection criteria and motivates the use of alternative strategies, such as the proposed plug-in rule, in high-dimensional applications.

Experimentally, first, we evaluated the performance of the proposed \texttt{fLiu} estimator relative to OLS, ridge, classical Liu, and functional generalized ridge estimators under a GCV-based tuning framework. The results showed that \texttt{fLiu} produces stable and interpretable coefficient estimates while maintaining strong predictive performance. In particular, the estimator provides a flexible compromise between shrinkage and smoothness, performing competitively with existing penalized approaches while remaining robust under substantial predictor dependence.

Second, we illustrated the practical utility of the proposed method through prediction of total annual precipitation at Canadian weather stations using annual temperature trajectories. The empirical results show that seasonal temperature profiles contain substantial predictive information for precipitation outcomes, and that the proposed estimator performs effectively in this structured functional setting.

Third, we examined the dependence structure among stations through the seasonal behavior of the functional predictors. The pairwise relationships among monthly temperature variables showed strong correlation induced by common annual cycles, leading to pronounced multicollinearity in the associated design matrix. This observation reinforces the importance of functional representations together with regularization when modeling correlated climatic processes.

Overall, the findings demonstrate that the proposed \texttt{fLiu} estimator constitutes a stable, flexible, and practically effective addition to the class of penalized functional regression methods. By integrating shrinkage and smoothing within a unified framework, it offers a principled approach for handling highly correlated functional predictors while preserving interpretability and predictive reliability.

Several directions remain for future research. Extending the framework to function-on-function regression and generalized functional models would broaden its applicability. 
Another interesting direction would be theoretical analysis of the properties of our plug-in estimate.
Finally, incorporating sparsity into the functional Liu-type framework could enhance interpretability when multiple correlated functional predictors are present.

\section*{Acknowledgments}
ChatGPT (version GPT-5.3) was used for Language improvement, Interactive online search with LLM-enhanced search engines, Literature classification and 
Coding assistance.

\section*{Data Availability Statement}
The data that support the findings of this study are openly available 
\ifisanonymous
(link redacted during anonymous review).
\else
in the \texttt{fliu} repository at \url{https://github.com/stephenbeckr/functional-Liu}.
\fi

\bibliography{references}

@article{barata2012moore,
  title={The {Moore-Penrose} pseudoinverse: A tutorial review of the theory},
  author={Barata, Jo{\~a}o Carlos Alves and Hussein, Mahir Saleh},
  journal={Brazilian Journal of Physics},
  volume={42},
  number={1},
  pages={146--165},
  year={2012},
  publisher={Springer}
}

@article{akdeniz2007,
  author    = {Akdeniz, F. and Ka{\c{c}}iranlar, S.},
  title     = {A generalized {L}iu estimator for handling multicollinearity},
  journal   = {Journal of Statistical Planning and Inference},
  volume    = {137},
  pages     = {1872--1880},
  year      = {2007}
}

@article{cardot2003,
  author    = {Cardot, H. and Ferraty, F. and Sarda, P.},
  title     = {Spline estimators for the functional linear model},
  journal   = {Statistica Sinica},
  volume    = {13},
  number    = {3},
  pages     = {571--591},
  year      = {2003}
}

@article{craven1979,
  author    = {Craven, P. and Wahba, G.},
  title     = {Smoothing noisy data with spline functions: Estimating the correct degree of smoothing by the method of generalized cross-validation},
  journal   = {Numerische Mathematik},
  volume    = {31},
  number    = {4},
  pages     = {377--403},
  year      = {1979}
}

@article{eilers1996,
  author    = {Eilers, P. H. and Marx, B. D.},
  title     = {Flexible smoothing with {B}-splines and penalties},
  journal   = {Statistical Science},
  volume    = {11},
  number    = {2},
  pages     = {89--121},
  year      = {1996}
}

@article{hoerl1970,
  author    = {Hoerl, A. E. and Kennard, R. W.},
  title     = {Ridge regression: Biased estimation for nonorthogonal problems},
  journal   = {Technometrics},
  volume    = {12},
  number    = {1},
  pages     = {55--67},
  year      = {1970}
}

@book{tikhonov1977solutions,
  author    = {Tikhonov, Andrey N. and Arsenin, Vasiliy Y.},
  title     = {Solutions of Ill-Posed Problems},
  publisher = {Winston and Sons},
  address   = {Washington, DC},
  year      = {1977}
}

@article{kibria2003,
  author    = {Kibria, B. M. G.},
  title     = {Some improved ridge regression estimators and their applications},
  journal   = {Journal of Modern Applied Statistical Methods},
  volume    = {2},
  pages     = {133--144},
  year      = {2003}
}

@article{liu1993,
  author    = {Liu, K.},
  title     = {A new class of biased estimate in linear regression},
  journal   = {Communications in Statistics --- Theory and Methods},
  volume    = {22},
  pages     = {393--402},
  year      = {1993}
}

@article{liu2003,
  author    = {Liu, K.},
  title     = {On the statistical properties of the {L}iu estimator},
  journal   = {Communications in Statistics --- Theory and Methods},
  volume    = {32},
  number    = {5},
  pages     = {1009--1020},
  year      = {2003}
}

@article{mehrotra2022,
  title={Simultaneous variable selection, clustering, and smoothing in function-on-scalar regression},
  author={Mehrotra, Suchit and Maity, Arnab},
  journal={Canadian Journal of Statistics},
  volume={50},
  number={1},
  pages={180--199},
  year={2022},
  publisher={Wiley Online Library}
}

@book{ramsay2002,
  author    = {Ramsay, J. O. and Silverman, B. W.},
  title     = {Applied Functional Data Analysis: Methods and Case Studies},
  publisher = {Springer},
  address   = {New York},
  edition   = {1st},
  year      = {2002}
}

@book{ramsay2005,
  author    = {Ramsay, J. O. and Silverman, B. W.},
  title     = {Functional Data Analysis},
  edition   = {2},
  publisher = {Springer},
  year      = {2005}
}

@book{RamsayHookerGraves2009,
  author    = {James O. Ramsay and Giles Hooker and Spencer Graves},
  title     = {Functional Data Analysis with R and MATLAB},
  publisher = {Springer},
  address   = {New York},
  year      = {2009}
}

@article{gruber2010,
  author  = {Gruber, Marvin H. J.},
  title   = {Liu and Ridge Estimators: A Comparison},
  journal = {Communications in Statistics---Theory and Methods},
  year    = {2010},
  volume  = {39},
  number  = {8},
  pages   = {1485--1494},
  doi     = {10.1080/03610920902873917}
}

@article{filzmoser2016,
  author  = {Filzmoser, Peter and Kurnaz, Fatih Serkan},
  title   = {A Robust Liu Regression Estimator},
  journal = {Communications in Statistics---Theory and Methods},
  year    = {2016},
  volume  = {47},
  number  = {6},
  pages   = {1285--1298},
  doi     = {10.1080/03610926.2016.1271889}
}

@article{ozkale2007,
  author  = {\"Ozkale, M. R. and Ka{\c{c}}iranlar, S.},
  title   = {The restricted and unrestricted two-parameter estimator},
  journal = {Statistics and Probability Letters},
  year    = {2007},
  volume  = {77},
  number  = {4},
  pages   = {438--446},
  doi     = {10.1016/j.spl.2006.06.018}
}

\end{document}